\makeatletter\@ifpackageloaded{mathpazo}\@tempswatrue\@tempswafalse
  \DeclareFontFamily{OT1}{pzc}{}
  \DeclareFontShape{OT1}{pzc}{m}{it}{<-> s * [1.15] pzcmi7t}{}
  \DeclareMathAlphabet{\mathpzc}{OT1}{pzc}{m}{it}
\makeatletter\@ifpackageloaded{biblatex}{%
  \bibliography{references}
  % in: pois kaikista määrittelyistä paitsi jos tyyppi on incollection
  \renewbibmacro{in:}{%
    \ifentrytype{incollection}{\printtext{\bibstring{in}\intitlepunct}}{}}
  % osoite oikealle paikalle yms.
  \renewbibmacro{publisher+location+date}{%
    \iflistundef{publisher}
      {\setunit*{\addcomma\space}}
      {\setunit*{\addcomma\space}}%
    \printlist{publisher}%
    \setunit*{\addcomma\space}%
    \printlist{location}%
    \setunit*{\addcomma\space}%
    \usebibmacro{date}%
    \newunit}
  % pp. pois artikkelin sivumäärittelyistä
  \DeclareFieldFormat[article]{pages}{#1\isdot}
  % lainausmerkit ja kursiivi pois artikkelin nimestä
  \DeclareFieldFormat[article,incollection,inproceedings,unpublished]{title}{#1\isdot}
  % lehden nimi kursiivilla
  \DeclareFieldFormat[article]{journaltitle}{\mkbibemph{#1\isdot}}
  
}{}\makeatother
\declaretheorem[numberwithin=section,refname={theorem,theorems},Refname={Theorem,Theorems}]{theorem}
\declaretheorem[sibling=theorem,name=Lemma]{lemma}
\declaretheorem[sibling=theorem,name=Proposition]{proposition}
\declaretheorem[sibling=theorem,name=Corollary]{corollary}
\declaretheorem[sibling=theorem,name=Example,style=definition]{example}
\makeatletter\@ifpackageloaded{hyperref}{%
  % Hyperlinkkien värit.
  \usepackage{xcolor}
  \definecolor{dark-red}{rgb}{0.4,0.15,0.15}
  \definecolor{dark-blue}{rgb}{0.15,0.15,0.4}
  \definecolor{medium-blue}{rgb}{0,0,0.5}
  \hypersetup{
    colorlinks,
    linkcolor={dark-red},
    citecolor={dark-blue},
    urlcolor={medium-blue}%
  }
  % Jotta autorefin yhteydessä olisi oikea kapitalisaatio.

}{}\makeatother
\newcommand{\address}[1]{\vspace{-2em}\begin{center}{\footnotesize #1}\end{center}}
\newcommand{\mirror}[1]{\widetilde{#1}}
\newcommand{\N}{\mathbb{N}}
\newcommand{\Z}{\mathbb{Z}}
\newcommand{\Q}{\mathbb{Q}}
\newcommand{\T}{\mathbb{T}}
\newcommand{\ind}[1]{\text{ind}(w)}
\newcommand{\indq}[1]{\text{ind}_\Q(w)}
\newcommand{\Lang}{\mathcal{L}}
\newcommand{\keywords}[1]{\par\noindent{\footnotesize{\em Keywords\/}: #1}}
\begin{document}
  \title{Characterization of repetitions in Sturmian words:\\A new proof}
  \author{Jarkko Peltomäki\\
          \small \href{mailto:jspelt@utu.fi}{jspelt@utu.fi}}
  \date{}
  \maketitle
  \address{Turku Centre for Computer Science TUCS, 20520 Turku, Finland\\
           University of Turku, Department of Mathematics and Statistics, 20014 Turku, Finland}

  \noindent
  \hrulefill
  \begin{abstract}
    \vspace{-1em}
    \noindent
    We present a new, dynamical way to study powers (that is, repetitions) in Sturmian words based on results from
    Diophantine approximation theory. As a result, we provide an alternative and shorter proof of a result by Damanik
    and Lenz characterizing powers in Sturmian words [Powers in Sturmian sequences, Eur. J. Combin. 24 (2003),
    377--390]. Further, as a consequence, we obtain a previously known formula for the fractional index of a Sturmian
    word based on the continued fraction expansion of its slope.
    \vspace{1em}
    \keywords{sturmian word, standard word, power, combinatorics on words, continued fraction}
    \vspace{-1em}
  \end{abstract}
  \hrulefill

  % footnote ilman numeroa
  \let\thefootnote\relax\footnotetext{{\scriptsize \textcopyright\, 2015, Elsevier. Licensed under the
  \href{http://creativecommons.org/licenses/by-nc-nd/4.0/}%
  {Creative Commons Attribution-NonCommercial-NoDerivatives 4.0 International}.}}

  \section{Introduction}
  In 2003 Damanik and Lenz \cite{2003:powers_in_sturmian_sequences} completely described factors of length $n$ of a
  Sturmian word which occur as $p^{th}$ powers for every $n \geq 0$ and $p \geq 1$. Damanik and Lenz prove a series of
  results concerning how factors of a Sturmian word align to the corresponding (finite) standard words. By a careful
  analysis of the alignment, they obtain the complete description of powers thanks to known results on powers of
  standard words. Our method is based on the dynamical view of Sturmian words as codings of irrational rotations.
  Translating word-combinatorial concepts into corresponding dynamical concepts allows us to apply powerful results
  from Diophantine approximation theory (such as the Three Distance Theorem) providing a more geometric proof of the
  result of Damanik and Lenz. Our methods allow us to avoid tricky alignment arguments making the proof in our opinion
  easier to follow. Furthermore, the results allow us to infer a formula for the fractional index of a Sturmian word
  based on the continued fraction expansion of its slope. This formula and its proof appeared in an earlier paper by
  Damanik and Lenz \cite{2002:the_index_of_sturmian_sequences} and was also established purely combinatorially using
  alignment arguments. The formula was independently obtained with different methods by Carpi and de Luca
  \cite{2000:special_factors_periodicity_and_an_application_to_sturmian} and Justin and Pirillo
  \cite{2001:fractional_powers_in_sturmian_words}. For partial results and works related to powers in Sturmian words
  see e.g. the papers of Mignosi \cite{1989:infinite_words_with_linear_subword_complexity}, Berstel
  \cite{1999:on_the_index_of_sturmian_words}, Vandeth \cite{2000:sturmian_words_and_words_with_a_critical_exponent},
  and Justin and Pirillo \cite{2001:fractional_powers_in_sturmian_words}.

  The paper is organized as follows: in \autoref{sec:continued_fractions} we briefly recall results concerning
  continued fractions and rational approximations and prove the purely number-theoretic and important
  \autoref{prp:closest} for later use in \autoref{sec:main_results}. In \autoref{sec:sturmian_words} we state needed
  facts about Sturmian words with appropriate references. \autoref{sec:main_results} contains the main results and
  their proofs.

  \section{Continued Fractions and Rational Approximations}\label{sec:continued_fractions}
  Every irrational real number $\alpha$ has a unique infinite continued fraction expansion
  \begin{align}\label{eq:cf}
    \alpha = [a_0; a_1, a_2, a_3, \ldots] = a_0 + \dfrac{1}{a_1 + \dfrac{1}{a_2 + \dfrac{1}{a_3 + \cdots}}}
  \end{align}
  with $a_0 \in \Z$ and $a_k \in \N$ for all $k \geq 1$. The numbers $a_i$ are called the \emph{partial quotients} of
  $\alpha$. Good references on continued fractions are the books of Khinchin \cite{1997:continued_fractions} and
  Cassels \cite{1957:an_introduction_to_diophantine_approximation}. We focus here only on irrational numbers, but we
  note that with small tweaks much of what follows also holds for rational numbers, which have finite continued fraction
  expansions.
  
  The \emph{convergents} $c_k = \frac{p_k}{q_k}$ of $\alpha$ are defined by the recurrences
  \begin{alignat*}{4}
    p_0 = a_0, &\qquad p_1 = a_1 a_0 + 1, &\qquad p_k = a_k p_{k-1} + p_{k-2}, \qquad& k \geq 2, \\
    q_0 = 1,   &\qquad q_1 = a_1,         &\qquad q_k = a_k q_{k-1} + q_{k-2}, \qquad& k \geq 2.
  \end{alignat*}
  The sequence $(c_k)_{k \geq 0}$ converges to $\alpha$. Moreover, the even convergents are less than $\alpha$ and form
  an increasing sequence and, on the other hand, the odd convergents are greater than $\alpha$ and form a decreasing
  sequence.

  If $k \geq 2$ and $a_k > 1$, then between the convergents $c_{k-2}$ and $c_k$ there are \emph{semiconvergents}
  (called intermediate fractions in Khinchin's book \cite{1997:continued_fractions}) which are of the form
  \begin{align*}
    \frac{p_{k,l}}{q_{k,l}} = \frac{lp_{k-1} + p_{k-2}}{lq_{k-1} + q_{k-2}}
  \end{align*}
  with $1 \leq l < a_k$. When the semiconvergents (if any) between $c_{k-2}$ and $c_k$ are ordered by the size of their
  denominators, the obtained sequence is increasing if $k$ is even and decreasing if $k$ is odd.

  Note that we make a clear distinction between convergents and semiconvergents, i.e., convergents are not a specific
  subtype of semiconvergents.
  
  For the rest of this paper we make the convention that $\alpha$ refers to an irrational number with a continued
  fraction expansion as in \eqref{eq:cf} having convergents $\frac{p_k}{q_k}$ and semiconvergents
  $\frac{p_{k,l}}{q_{k,l}}$ as above.

  A rational number $\frac{a}{b}$ is a \emph{best approximation} of the real number $\alpha$ if for every fraction
  $\frac{c}{d}$ such that $\frac{c}{d} \neq \frac{a}{b}$ and $d \leq b$ it holds that
  \begin{align*}
    \left|b\alpha - a\right| < \left|d\alpha - c\right|.
  \end{align*}
  In other words, any other multiple of $\alpha$ with a coefficient at most $b$ is further away from the nearest
  integer than is $b\alpha$. The next proposition shows that the best approximations of an irrational number are
  connected to its convergents (for a proof see Theorems 16 and 17 of \cite{1997:continued_fractions}).

  \begin{proposition}\label{prp:best_approximation_convergent}
    The best rational approximations of an irrational number are exactly its convergents.
  \end{proposition}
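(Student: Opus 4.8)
The plan is to prove the two inclusions separately: every convergent (of index at least one) is a best approximation, and conversely every best approximation is a convergent. Throughout I would use the determinant identity $p_kq_{k+1} - p_{k+1}q_k = (-1)^k$, which follows by an immediate induction from the defining recurrences, together with two facts recorded above: the denominators $q_k$ are strictly increasing for $k \geq 1$, and consecutive convergents straddle $\alpha$ (even ones below, odd ones above), so that the signed errors $\delta_k = q_k\alpha - p_k$ alternate in sign. I abbreviate $\theta_k = |\delta_k|$.

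For the first inclusion, fix $k \geq 1$ and let $\frac{c}{d} \neq \frac{p_k}{q_k}$ with $1 \leq d \leq q_k$. Since the determinant identity makes the matrix with columns $(p_k, q_k)$ and $(p_{k+1}, q_{k+1})$ unimodular, there are unique integers $a, b$ with $c = ap_k + bp_{k+1}$ and $d = aq_k + bq_{k+1}$, whence $d\alpha - c = a\delta_k + b\delta_{k+1}$. The key step is a sign analysis: $b \neq 0$ (else $\frac{c}{d} = \frac{p_k}{q_k}$) and $a \neq 0$ (else $d = bq_{k+1} \geq q_{k+1} > q_k \geq d$); moreover $a$ and $b$ must have opposite signs, since if they agreed they would both be positive (as $d > 0$), forcing $d \geq q_k + q_{k+1} > q_k$. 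Because $\delta_k$ and $\delta_{k+1}$ also have opposite signs, the two summands $a\delta_k$ and $b\delta_{k+1}$ share a common sign, so the triangle inequality collapses to an exact sum of magnitudes:
\[
  |d\alpha - c| = |a|\,\theta_k + |b|\,\theta_{k+1} \geq \theta_k + \theta_{k+1} > \theta_k = |q_k\alpha - p_k|.
\]
This is precisely the defining property of $\frac{p_k}{q_k}$ being a best approximation.

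For the converse, suppose $\frac{a}{b}$ (in lowest terms, $b \geq 1$) is a best approximation, and choose $k \geq 1$ with $q_k \leq b < q_{k+1}$. If $q_k < b$, then the displayed inequality above, applied with $d = b$ and $c = a$ (which differ from $\frac{p_k}{q_k}$ because $b \neq q_k$), yields $|b\alpha - a| > |q_k\alpha - p_k|$ while $q_k \leq b$, contradicting that $\frac{a}{b}$ is a best approximation. Hence $b = q_k$; comparing $\frac{a}{b}$ with $\frac{p_k}{q_k}$ in both directions then forces $a = p_k$, since each would otherwise have to be strictly closer than the other (using that $\frac{p_k}{q_k}$ is already known to be a best approximation). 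Thus $\frac{a}{b}$ is a convergent.

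I expect the sign analysis in the first inclusion to be the main obstacle: establishing that $a$ and $b$ have opposite signs in $d\alpha - c = a\delta_k + b\delta_{k+1}$ is exactly what converts a triangle inequality into an exact sum of magnitudes and delivers the strict bound. The only loose end is the lowest-index case $k = 0$: the argument applies verbatim once $a_1 \geq 2$ (so that $q_1 > q_0$), whereas $\frac{p_0}{q_0} = \frac{a_0}{1}$ is a best approximation precisely when $a_0$ is the nearest integer to $\alpha$. I would dispatch this small denominator case by hand, matching the treatment in Theorems 16 and 17 of \cite{1997:continued_fractions}.
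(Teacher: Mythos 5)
The paper never actually proves this proposition: it states it with a pointer to Theorems 16 and 17 of Khinchin's book, and the convention $a_1 \geq 2$ adopted at the end of Section 3 quietly removes the one classical exception. So any self-contained argument is ``different'' from the paper by default; yours is the standard textbook one: write $(c,d)$ in the unimodular basis $(p_k,q_k)$, $(p_{k+1},q_{k+1})$, show the coordinates $a,b$ are nonzero with opposite signs, and use the alternation of the signed errors $\delta_k$ to turn the triangle inequality into an exact sum of magnitudes. That core is sound, up to an immaterial sign slip: the identity is $p_kq_{k+1}-p_{k+1}q_k=(-1)^{k+1}$, not $(-1)^k$ (only $\pm 1$ matters for your purposes). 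What your write-up buys over the paper is self-containedness; what the paper buys by citing is that it never has to confront the $a_1=1$ caveat, under which $p_0/q_0$ is not a best approximation and the proposition as literally stated fails.

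There is one real flaw, though it is repairable with your own tools. You establish the displayed inequality under the hypothesis $1 \leq d \leq q_k$, but in the converse direction you apply it with $d=b$ where $q_k < b < q_{k+1}$, which that hypothesis does not cover. Fortunately the sign analysis never uses $d \leq q_k$ in an essential way: ruling out $a=0$ needs only $d < q_{k+1}$ (else $d = bq_{k+1} \geq q_{k+1}$), and ruling out equal signs needs only $d < q_k + q_{k+1}$, which again follows from $d < q_{k+1}$. So the lemma you should state and prove is: for $1 \leq d < q_{k+1}$ and $\frac{c}{d} \neq \frac{p_k}{q_k}$, one has $\left|d\alpha - c\right| \geq \theta_k + \theta_{k+1} > \theta_k$. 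With that single change, both inclusions go through exactly as you wrote them. The loose ends you flag at the end --- the range $1 \leq b < q_1$ in the converse, and the status of $p_0/q_0$ --- are genuine but minor: the former needs the short by-hand argument you indicate, and the latter is precisely the classical exception that forces either a convention like the paper's $a_1 \geq 2$ or a weakened statement.
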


  We identify the unit interval $[0,1)$ with the unit circle $\T$. Let $\alpha \in (0,1)$ be irrational. The map
  \begin{align*}
    R: [0, 1) \to [0, 1), \, x \mapsto \{x + \alpha\},
  \end{align*}
  where $\{x\}$ stands for the fractional part of the number $x$, defines a rotation on $\T$. The circle partitions
  into the intervals $(0,\frac{1}{2})$ and $(\frac{1}{2},1)$. Points in the same interval of the partition are said to
  be on the same side of $0$, and points in different intervals are said to be on the opposite sides of $0$. (We are not
  interested in the location of the point $\frac{1}{2}$.) The points $\{q_k \alpha\}$ and $\{q_{k-1}\alpha\}$ are
  always on the opposite sides of $0$. The points $\{q_{k,l}\alpha\}$ with $0 < l \leq a_k$ always lie between the
  points $\{q_{k-2}\alpha\}$ and $\{q_k \alpha\}$; see \eqref{eq:distance_difference}.

  We measure the shortest distance to $0$ on $\T$ by setting
  \begin{align*}
    \|x\| = \min\{\{x\}, 1 - \{x\}\}. 
  \end{align*}
  We have the following facts for $k \geq 2$ and for all $l$ such that $0 < l \leq a_k$:
  \begin{align}\label{eq:distance_formula}
    \|q_{k,l} \alpha\| &= (-1)^k(q_{k,l} \alpha - p_{k,l}),\\
    \|q_{k,l}\alpha\|  &= \|q_{k,l-1}\alpha\| - \|q_{k-1}\alpha\|. \label{eq:distance_difference}
  \end{align}
  We can now interpret \autoref{prp:best_approximation_convergent} as
  \begin{align}\label{eq:min_distance}
    \min_{0 < n < q_k} \|n\alpha\| = \|q_{k-1}\alpha\|, \quad \text{for } k \geq 1.
  \end{align}
  Note that rotating preserves distances; a fact we will often use without explicit mention. In particular, the
  distance between the points $\{n\alpha\}$ and $\{m\alpha\}$ is $\||n-m|\alpha\|$. Thus by \eqref{eq:min_distance} the
  minimum distance between the distinct points $\{n\alpha\}$ and $\{m\alpha\}$ with $0 \leq n,m < q_k$ is at least
  $\|q_{k-1}\alpha\|$. The formula \eqref{eq:min_distance} tells what is the point closest to $0$ among the points
  $\{n\alpha\}$ for $1 \leq n \leq q_k - 1$. We are also interested to know the point closest to $0$ on the side
  opposite to $\{q_{k-1}\alpha\}$. The next result is very important and concerns this. 

  \begin{proposition}\label{prp:closest}
    Let $\alpha$ be an irrational number. Let $n$ be an integer such that $0 < n < q_{k,l}$ with $k \geq 2$ and
    $0 < l \leq a_k$. If $\|n\alpha\| < \|q_{k,l-1}\alpha\|$, then $n = mq_{k-1}$ for some integer $m$ such that
    $1 \leq m \leq \min\{l, a_k - l + 1\}$.
  \end{proposition}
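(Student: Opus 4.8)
The plan is to read the hypothesis geometrically: $\|n\alpha\| < \|q_{k,l-1}\alpha\|$ says that $\{n\alpha\}$ is closer to $0$ than the semiconvergent point $\{q_{k,l-1}\alpha\}$, and I want to show this can happen only when $q_{k-1}$ divides $n$. Throughout I write $\eta = \|q_{k-1}\alpha\|$ and $\delta = \|q_{k-2}\alpha\|$. First I would iterate \eqref{eq:distance_difference} starting from $q_{k,0} = q_{k-2}$ to record $\|q_{k,j}\alpha\| = \delta - j\eta$ for $0 \le j \le a_k$; in particular the threshold equals $\delta - (l-1)\eta$, and $\delta = a_k\eta + \|q_k\alpha\|$ with $0 < \|q_k\alpha\| < \eta < \delta$. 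The only external inputs will be the two instances of \eqref{eq:min_distance}: every nonzero index below $q_k$ has norm at least $\eta$, and every nonzero index below $q_{k-1}$ has norm at least $\delta$.

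The engine of the proof is a descent that pushes $\{n\alpha\}$ toward $0$ by steps of $\eta$. Recall that $\{q_{k-1}\alpha\}$ lies on one side of $0$ (call it side $A$) while $\{q_{k-2}\alpha\}$ and all semiconvergent points $\{q_{k,j}\alpha\}$ lie on the opposite side (side $B$); since $n < q_{k,l} \le q_k$, \eqref{eq:min_distance} already gives $\|n\alpha\| \ge \eta$. Suppose first that $\{n\alpha\}$ is on side $A$. Subtracting $q_{k-1}$ from the index shifts the point by $\eta$ toward $0$ along side $A$, and because every intermediate index $n - iq_{k-1}$ that is positive and below $q_k$ still has norm at least $\eta$, the point can never cross $0$: the distances decrease by exactly $\eta$ at each step and stay on side $A$. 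Carrying this out until the index drops below $q_{k-1}$ leaves a residue $r = n - mq_{k-1} \in [0, q_{k-1})$ with $\|r\alpha\| = \|n\alpha\| - m\eta < \delta$. If $r \neq 0$ this contradicts $\|r\alpha\| \ge \delta$, so $r = 0$ and $n = mq_{k-1}$, exactly as claimed.

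It then remains to rule out side $B$, and this is the step I expect to be the main obstacle, since side $B$ is governed by the semiconvergents rather than by a single denominator. Here I would argue that a side-$B$ point with $0 < n < q_{k,l}$ can never beat the threshold: choosing the semiconvergent $q_{k,s}$ whose point sits just above $\{n\alpha\}$ on side $B$ and subtracting it produces an index $\nu$ with $\|\nu\alpha\| < \eta$; one then checks $|\nu| < q_k$, so \eqref{eq:min_distance} forces $\nu = 0$, giving $n = q_{k,s}$ with $s \le l-1$, whence $\|n\alpha\| \ge \|q_{k,l-1}\alpha\|$, contradicting the hypothesis. (Phrased as ``driving the distance below $\eta$ forces the reduced index to vanish,'' this same reduction also reproves the side-$A$ case uniformly.) The delicate points are confirming that the reduced index stays strictly below $q_k$ and handling the boundary case in which the reduced distance equals $\eta$ exactly, which pins the reduced point to $\pm q_{k-1}$ and again identifies $n$ with a semiconvergent of index below $q_{k,l}$.

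Finally, with $n = mq_{k-1}$ in hand, the range of $m$ follows from two elementary inequalities. The distance bound $m\eta = \|n\alpha\| < \delta - (l-1)\eta$ together with $\delta = a_k\eta + \|q_k\alpha\|$ and $0 < \|q_k\alpha\| < \eta$ gives $m < a_k - l + 1 + \|q_k\alpha\|/\eta$, hence $m \le a_k - l + 1$; and the index bound $mq_{k-1} = n < q_{k,l} = lq_{k-1} + q_{k-2}$ with $0 < q_{k-2} < q_{k-1}$ gives $m \le l$. Since $n > 0$ forces $m \ge 1$, this yields $1 \le m \le \min\{l, a_k - l + 1\}$, completing the argument.
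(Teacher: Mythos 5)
Your proposal is correct, and its overall skeleton matches the paper's: both arguments split according to which side of $0$ the point $\{n\alpha\}$ lies on, both eliminate the side of $\{q_{k-2}\alpha\}$ by trapping $\{n\alpha\}$ among the semiconvergent points $\{q_{k,j}\alpha\}$ and invoking \eqref{eq:min_distance}, and your two bounds on $m$ are the paper's almost verbatim. The genuine difference is in the central case, where $\{n\alpha\}$ lies on the side of $\{q_{k-1}\alpha\}$. The paper assumes $n$ is \emph{not} a multiple of $q_{k-1}$, traps $\{n\alpha\}$ strictly between $\{tq_{k-1}\alpha\}$ and $\{(t+1)q_{k-1}\alpha\}$, and reaches a two-sided contradiction on $t$: \eqref{eq:min_distance} forces $t > a_k$, while the hypothesis combined with the inequality $\|q_{k,l-1}\alpha\| \leq \|q_{k-2}\alpha\| < (a_k+1)\|q_{k-1}\alpha\|$ (recorded as \eqref{eq:max_coefficient}) forces $t \leq a_k$. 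You instead run a Euclidean descent: subtract $q_{k-1}$ repeatedly, use \eqref{eq:min_distance} at level $q_k$ to guarantee the point never crosses $0$, and then annihilate the residue $r = n - mq_{k-1} \in [0, q_{k-1})$ by a second application of \eqref{eq:min_distance} one level down, $\min_{0 < \nu < q_{k-1}} \|\nu\alpha\| = \|q_{k-2}\alpha\|$ --- an instance the paper never uses. Your version is arguably more uniform (as you note, the same reduce-and-force-to-zero mechanism also disposes of the other side, where the paper instead rules out the coincidences $n = q_{k,r}$ first), and it avoids the counting contradiction entirely; the paper's version stays at the single level $k$ and produces \eqref{eq:max_coefficient} as a by-product, which the text reuses immediately afterwards to derive \eqref{eq:next_quotient} --- under your proof that inequality must instead be read off from your setup identity $\|q_{k-2}\alpha\| = a_k\|q_{k-1}\alpha\| + \|q_k\alpha\|$, which is easy. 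Two microscopic repairs: in the descent, a reduced distance exactly equal to $\|q_{k-1}\alpha\|$ at an index other than $q_{k-1}$ should be excluded explicitly (it is impossible since $\alpha$ is irrational --- you flag this boundary issue only for the other side); and in the bound $m \leq l$, the strict inequality $q_{k-2} < q_{k-1}$ can fail when $k = 2$ and $a_1 = 1$, but the weak inequality $q_{k-2} \leq q_{k-1}$ suffices for the same conclusion.
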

  \begin{proof}
    Suppose that $\|n\alpha\| < \|q_{k,l-1}\alpha\|$, and assume for a contradiction that the point $\{n\alpha\}$ is on
    the same side of $0$ as $\{q_{k-2}\alpha\}$. Since $n < q_{k,l}$, we conclude that $n \neq q_{k,r}$ for $r \geq l$.
    By \eqref{eq:distance_difference} and our assumption that $\|n\alpha\| < \|q_{k,l-1}\|$, we see that
    $n \neq q_{k,r}$ with $0 \leq r \leq l-1$. As $\|n\alpha\| > \|q_k \alpha\|$ by \eqref{eq:min_distance}, we infer
    that the point $\{n\alpha\}$ must lie between the points $\{q_{k,l'}\alpha\}$ and $\{q_{k,l'+1}\alpha\}$ for some
    $l'$ such that $0 \leq l' < a_k$. The distance between the points $\{n\alpha\}$ and $\{q_{k,l'}\}$ is less than
    $\|q_{k-1}\alpha\|$. By \eqref{eq:min_distance}, it must be that $q_{k,l'} \geq q_k$; a contradiction.

    Suppose for a contradiction that $n$ is not a multiple of $q_{k-1}$. Then the point $\{n\alpha\}$ lies between
    the points $\{tq_{k-1}\alpha\}$ and $\{(t+1)q_{k-1}\alpha\}$ for some $t$ such that
    $0 < t < \lfloor 1/\|q_{k-1}\alpha\| \rfloor$. As $\{n\alpha\}$ is on the same side of $0$ as the point
    $\{q_{k-1}\alpha\}$, it follows that $\|n\alpha\| > \|tq_{k-1}\alpha\|$ and
    $\|tq_{k-1}\alpha\| = t\|q_{k-1}\alpha\|$. The distance between the points $\{n\alpha\}$ and $\{tq_{k-1}\alpha\}$
    is less than $\|q_{k-1}\alpha\|$, so by \eqref{eq:min_distance} it must be that
    $tq_{k-1} \geq q_k = a_kq_{k-1} + q_{k-2}$. Thus necessarily $t > a_k$. Using \eqref{eq:distance_difference} we
    see that the distance between the points $\{q_k\alpha\}$ and $\{q_{k-2}\alpha\}$ is $a_k\|q_{k-1}\alpha\|$. Since
    $\|q_k \alpha\| < \|q_{k-1}\alpha\|$, we infer that
    \begin{align}\label{eq:max_coefficient}
      \|q_{k,l-1}\alpha\| \leq \|q_{k-2}\alpha\| = a_k\|q_{k-1}\alpha\| + \|q_k \alpha\| < (a_k + 1)\|q_{k-1}\alpha\|.
    \end{align}
    Therefore by our assumption,
    \begin{align*}
      (a_k + 1)\|q_{k-1}\alpha\| > \|q_{k,l-1}\alpha\| > \|n\alpha\| > t\|q_{k-1}\alpha\|,
    \end{align*}
    so $a_k \geq t$; a contradiction. We have thus concluded that $n = mq_{k-1}$ for some $m \geq 1$.
    
    Let us now analyze the upper bound on $m$. First of all, $mq_{k-1} < q_{k,l}$ exactly when $m \leq l \leq a_k$. It
    follows that $\|mq_{k-1}\alpha\| = m\|q_{k-1}\alpha\|$. By \eqref{eq:distance_difference}
    \begin{align*}
      m\|q_{k-1}\alpha\| < \|q_{k,l-1}\alpha\| = (a_k - (l - 1))\|q_{k-1}\alpha\| + \|q_k\alpha\|,
    \end{align*}
    so $m \leq a_k - l + 1$. We conclude that $m \leq \min\{l, a_k - l + 1\}$.
  \end{proof}

  The inequalities \eqref{eq:distance_difference} and \eqref{eq:max_coefficient} imply that
  $a_k\|q_{k-1}\alpha\| < \|q_{k-2}\alpha\| < (a_k + 1)\|q_{k-1}\alpha\|$. We derive the following useful fact:
  \begin{align}\label{eq:next_quotient}
    a_k = \left\lfloor \frac{\|q_{k-2}\alpha\|}{\|q_{k-1}\alpha\|} \right\rfloor.
  \end{align}

  We need the famous Three Distance Theorem (see e.g. \cite{1998:three_distance_theorems_and_combinatorics_on_words}  
  and the references therein).

  \begin{theorem}[The Three Distance Theorem]\label{thm:three_distance}
    Let $\alpha$ be an irrational number, and let $n > a_1$ be a positive integer uniquely expressed in the form
    $n = lq_{k-1} + q_{k-2} + r$ with $k \geq 2, 0 < l \leq a_k,$ and $0 \leq r < q_{k-1}$. The points
    $0, \{\alpha\}, \{2\alpha\}, \ldots, \{n\alpha\}$ partition the circle $\T$ into $n+1$ intervals. There are exactly
    \begin{itemize}
      \item $n+1-q_{k-1}$ intervals of length $\|q_{k-1}\alpha\|$,
      \item $r+1$ intervals of length $\|q_{k,l}\alpha\|$, and 
      \item $q_{k-1} - (r+1)$ intervals of length $\|q_{k,l-1}\alpha\|$.
    \end{itemize}
  \end{theorem}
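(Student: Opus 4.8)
The plan is to prove the theorem through the classical \emph{two generators} description of the partition, realising the two generating return times explicitly as $q_{k-1}$ and $q_{k,l}$. First I would pin down the two relevant closest points. Since $n = q_{k,l} + r$ with $0 \le r < q_{k-1}$, we have $q_{k,l} \le n < q_{k,l} + q_{k-1} = q_{k,l+1}$, and both indices $q_{k-1}$ and $q_{k,l}$ lie in $\{1, \dots, n\}$ (indeed $n \geq q_{k,l} = lq_{k-1}+q_{k-2} > q_{k-1}$). By \eqref{eq:min_distance} the point $\{q_{k-1}\alpha\}$ is the point closest to $0$ on its side of $0$, at distance $\|q_{k-1}\alpha\|$; and because the next semiconvergent index $q_{k,l+1}$ already exceeds $n$, \autoref{prp:closest} forces $\{q_{k,l}\alpha\}$ to be the point closest to $0$ on the opposite side (that of $\{q_{k-2}\alpha\}$), at distance $\|q_{k,l}\alpha\|$. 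By \eqref{eq:distance_difference} the sum of these two distances is $\|q_{k-1}\alpha\| + \|q_{k,l}\alpha\| = \|q_{k,l-1}\alpha\|$, which is exactly the third length in the statement. Thus the three candidate gap lengths are identified at the outset, with the longest equal to the sum of the two shorter ones.

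Next I would describe the gap structure. Writing $x_i = \{i\alpha\}$, the clockwise neighbour of $x_i$ is reached either by the \emph{forward} move $i \mapsto i + q_{k-1}$ (available when $i + q_{k-1} \le n$, producing a gap of length $\|q_{k-1}\alpha\|$) or by the \emph{backward} move $i \mapsto i - q_{k,l}$ (available when $i \ge q_{k,l}$, producing a gap of length $\|q_{k,l}\alpha\|$); these contribute gaps on opposite sides of $0$ precisely because $\{q_{k-1}\alpha\}$ and $\{q_{k,l}\alpha\}$ lie on opposite sides. The decisive arithmetic fact is $q_{k-1} + q_{k,l} = q_{k,l+1} > n$, equivalently $q_{k-1} + q_{k,l} \geq n+1$: this makes the forward-available range $\{0, \dots, n - q_{k-1}\}$ and the backward-available range $\{q_{k,l}, \dots, n\}$ disjoint. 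Hence the $n+1$ points split into exactly three classes, namely forward-only ($n+1-q_{k-1}$ points, gap $\|q_{k-1}\alpha\|$), backward-only ($n+1-q_{k,l} = r+1$ points, gap $\|q_{k,l}\alpha\|$), and neither ($q_{k-1} - (r+1)$ points), where for a point of the last class the nearest clockwise neighbour is obtained by the combined move $i \mapsto i + q_{k-1} - q_{k,l}$, whose gap is $\|q_{k-1}\alpha\| + \|q_{k,l}\alpha\| = \|q_{k,l-1}\alpha\|$. Substituting $q_{k,l} = lq_{k-1} + q_{k-2}$ and $n = q_{k,l} + r$ then yields the three stated counts, which sum to $n+1$ and are all nonnegative since $q_{k-1} \leq n$ and $r < q_{k-1}$.

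The hard part will be justifying rigorously that these forward, backward, and combined moves really do give the clockwise nearest neighbours, i.e.\ that no other index $j \in \{0, \dots, n\}$ produces a strictly shorter clockwise gap. For this I would argue that a displacement $d = j - i$ yields a gap shorter than $\|q_{k,l-1}\alpha\|$ only if the associated point $x_{|d|}$ lies within distance $\|q_{k,l-1}\alpha\|$ of $0$; \autoref{prp:closest} together with \eqref{eq:min_distance} then constrains such displacements so severely that on each side the only undominated one is $q_{k-1}$ (giving $\|q_{k-1}\alpha\|$) or $q_{k,l}$ (giving $\|q_{k,l}\alpha\|$)---here one must note that larger close points such as multiples of $q_{k-1}$, though admitted by \autoref{prp:closest}, are never neighbours because a nearer available point always separates them, while genuinely competing smaller points like $\{q_k\alpha\}$ when $l < a_k$ have index exceeding $n$ and are thus unavailable. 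For a point admitting neither pure move, the minimal achievable clockwise gap is then exactly $\|q_{k,l-1}\alpha\|$, realised by $d = q_{k-1} - q_{k,l}$. The remaining care is purely bookkeeping at the boundary indices near $0$ and $n$, and verifying that the combined index $i + q_{k-1} - q_{k,l}$ stays in $\{0, \dots, n\}$ for the third class; both are routine given the ranges above. This keeps the entire argument inside the continued-fraction framework already developed, leaning on \autoref{prp:closest} in place of a separate inductive proof of the three-gap phenomenon.
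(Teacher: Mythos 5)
First, a point of calibration: the paper does \emph{not} prove \autoref{thm:three_distance} at all --- it is imported as a known result with a citation. So your proposal is not competing with an argument in the paper but supplying one, and the route you chose (deriving the three-gap structure from \autoref{prp:closest} and \eqref{eq:min_distance} rather than by an independent induction) is a sensible way to keep everything inside the paper's framework. Your skeleton is correct: $\{q_{k-1}\alpha\}$ and $\{q_{k,l}\alpha\}$ are the closest points to $0$ on their respective sides among indices at most $n$; because they lie on opposite sides of $0$, the moves $i \mapsto i+q_{k-1}$ and $i \mapsto i-q_{k,l}$ displace in the same rotational direction; the inequality $q_{k-1}+q_{k,l} = q_{k,l+1} > n$ makes the two availability ranges disjoint; and the three resulting classes have exactly the advertised cardinalities, the combined move being $i \mapsto i - q_{k,l-1}$ with displacement $\|q_{k,l-1}\alpha\|$.

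The gap is in how you certify the neighbour claims, and it is concrete: \autoref{prp:closest} carries the hypothesis $0 < n < q_{k,l}$, so it says nothing about displacement indices $d$ with $q_{k,l} \le d \le n$, which exist whenever $r \ge 1$. Moving one level up does not repair this: the instance of \autoref{prp:closest} at $(k,l+1)$ (or at $(k+1,1)$ when $l = a_k$) covers indices up to $q_{k,l+1} > n$, but only below the \emph{smaller} threshold $\|q_{k,l}\alpha\|$, whereas your treatment of ``the hard part'' needs control below $\|q_{k,l-1}\alpha\|$. The uncovered band of indices is genuinely nonempty: $d = (l+1)q_{k-1}$ satisfies $q_{k,l} < d \le n$ and $\|d\alpha\| = (l+1)\|q_{k-1}\alpha\| < \|q_{k,l-1}\alpha\|$ whenever $q_{k-1}-q_{k-2} \le r$ and $2l \le a_k$. (That point happens to be a harmless multiple of $q_{k-1}$, but the assertion that the band contains \emph{only} such multiples is precisely what remains to be proved, and no citation of \autoref{prp:closest} delivers it.) The same overreach occurs in your opening step: when $l = a_k$ and $r \ge 1$ we have $n > q_k$, and \eqref{eq:min_distance} only controls indices below $q_k$, so by itself it does not show that $\{q_{k-1}\alpha\}$ is the closest point to $0$ on its side.

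The repair is to check neighbours class by class, so that the class constraints force the displacements into ranges where your tools genuinely apply. For a class-3 point ($n-q_{k-1} < i < q_{k,l}$), every forward displacement is $< q_{k-1}$ and hence has distance at least $\|q_{k-2}\alpha\| \ge \|q_{k,l-1}\alpha\|$ by \eqref{eq:min_distance}, while every backward displacement is $< q_{k,l}$, where \autoref{prp:closest} yields a multiple of $q_{k-1}$, which lies on the wrong side of $0$; similar bookkeeping handles most of classes 1 and 2. The residual competitors of index $d = q_{k,l}+j$ with $0 < j \le r$ need a short separate argument: $\{d\alpha\}$ is the point $\{j\alpha\}$ shifted by $\|q_{k,l}\alpha\|$ toward the side of $\{q_{k,l}\alpha\}$, so combining $\|j\alpha\| \ge \|q_{k-2}\alpha\|$ (from \eqref{eq:min_distance}, as $j < q_{k-1}$) with the telescoped identity $\|q_{k-2}\alpha\| - \|q_{k,l}\alpha\| = l\|q_{k-1}\alpha\|$ from \eqref{eq:distance_difference} shows that $\{d\alpha\}$ either lies on the side of $\{q_{k-1}\alpha\}$ at distance at least $l\|q_{k-1}\alpha\| \ge \|q_{k-1}\alpha\|$, or on the side of $\{q_{k,l}\alpha\}$ strictly beyond it. With these patches your plan goes through, entirely within the paper's toolkit.
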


  By \eqref{eq:distance_difference} the intervals of the last type (if they exist) are the longest, and their length is
  the sum of the two other length types.

  \section{Word Combinatorics and Sturmian Words}\label{sec:sturmian_words}
  We mention here only few key concepts from combinatorics on words; good background references are Lothaire's books
  \cite{1983:combinatorics_on_words,2002:algebraic_combinatorics_on_words}.

  A word is \emph{primitive} if it is not a non-trivial power of some word. A word $w$ is primitive if and only if it
  occurs exactly twice in $w^2$. The \emph{cyclic shift operator} $C$ is defined by
  $C(a_1 \cdots a_{n-1} a_n) = a_n a_1 \cdots a_{n-1}$ where the $a_i$ are letters. A word $u$ is \emph{conjugate} to
  $v$ if $C^i(v) = u$ for some $i$ such that $0 \leq i < |v|$. The \emph{reversal} of the word
  $w = a_1 \cdots a_{n-1} a_n$ where the $a_i$ are letters is the word $\mirror{w} = a_n a_{n-1} \cdots a_1$.

  Sturmian words are a well-known class of infinite, aperiodic binary words over $\{0,1\}$ with minimal factor
  complexity. They are defined as the (right-)infinite words having $n+1$ factors of length $n$ for every $n \geq 0$.
  For our purposes it is more convenient to view Sturmian words equivalently as the infinite words obtained as codings
  of orbits of points in an irrational circle rotation with two intervals
  \cite{2002:substitutions_in_dynamics_arithmetics_and_combinatorics,2002:algebraic_combinatorics_on_words}.
  Let us make this more precise. The frequency $\alpha$ of letter $1$ (called the \emph{slope}) in a Sturmian words
  exists, and it is irrational. Divide the circle $\T$ into two intervals $I_0$ and $I_1$ defined by the points $0$ and
  $1-\alpha$, and define the coding function $\nu$ by setting $\nu(x) = 0$ if $x \in I_0$ and $\nu(x) = 1$ if
  $x \in I_1$. The coding of the orbit of a point $x$ is the infinite word $s_{x,\alpha}$ obtained by setting its
  $n^\text{th}, n \geq 0,$ letter to equal $\nu(R^n(x))$ where $R$ is the rotation by angle $\alpha$. This word is
  Sturmian with slope $\alpha$, and conversely every Sturmian word with slope $\alpha$ is obtained this way. To make the
  definition proper, we need to define how $\nu$ behaves in the endpoints $0$ and $1-\alpha$. We have two options:
  either take $I_0 = [0,1-\alpha)$ and $I_1 = [1-\alpha,1)$ or $I_0 = (0,1-\alpha]$ and $I_1 = (1-\alpha,1]$. The
  difference is seen in the codings of the orbits of the special points $\{-n\alpha\}$, and both options are needed to
  be able to obtain every Sturmian word of slope $\alpha$ as a coding of a rotation. However, in this paper we are not
  concerned about this choice. We make the convention that $I(x, y)$ with $x, y \neq 0$ is either of the half-open
  intervals of $\T$ separated by the points $x$ and $y$ (taken modulo $1$ if necessary) not containing the point $0$ as
  an interior point. The interval $I(x,0) = I(0,x)$ is either of the half-open intervals separated by the points $0$
  and $x$ having smallest length (the case $x = \frac{1}{2}$ is not important in this paper). Since the sequence
  $(\{n\alpha\})_{n\geq 0}$ is dense in $[0,1)$---as is well-known---every Sturmian word of slope $\alpha$ has the same
  language (that is, the set of factors); this language is denoted by $\Lang(\alpha)$. Thus to study repetitions, it
  is sufficient to analyze $\Lang(\alpha)$. The \emph{fractional index} $\indq{w}$ of a nonempty factor
  $w \in \Lang(\alpha)$ is defined as
  \begin{align*}
    \indq{w} = \sup\{k \in \Q\colon\, w^k \in \Lang(\alpha)\}
  \end{align*}
  where the \emph{fractional power} $w^k$ is the word $(uv)^n u$ with $w=uv$ and $k = n + \frac{|u|}{|w|}$. The
  \emph{index} $\ind{w}$ of a nonempty factor $w$ is defined similarly by letting $k$ take only integral values. The
  index of a factor in $\Lang(\alpha)$ is always finite. The \emph{fractional index of a Sturmian word} with slope
  $\alpha$ is defined to be
  \begin{align*}
    \sup\{\indq{w}\colon\, w \in \Lang(\alpha)\}.
  \end{align*}
  This quantity can be infinite.

  For every factor $w = a_0 a_1 \cdots a_{n-1}$ of length $n$ there exists a unique subinterval $[w]$ of $\T$ such that
  $s_{x,\alpha}$ begins with $w$ if and only if $x \in [w]$. Clearly
  \begin{align*}
    [w] = I_{a_0} \cap R^{-1}(I_{a_1}) \cap \ldots \cap R^{-(n-1)}(I_{a_{n-1}}).
  \end{align*}
  We denote the length of the interval $[w]$ by $|[w]|$. The points
  $0, \{-\alpha\}, \{-2\alpha\}, \ldots, \{-n\alpha\}$ partition the circle into $n+1$ intervals, which have one-to-one
  correspondence with the words of $\Lang(\alpha)$ of length $n$. Among these intervals the interval containing the
  point $\{-(n+1)\alpha\}$ corresponds to the right special factor of length $n$. A factor $w$ is \emph{right special}
  if both $w0, w1 \in \Lang(\alpha)$. Similarly a factor is \emph{left special} if both $0w, 1w \in \Lang(\alpha)$. In
  a Sturmian word there exists a unique right special and a unique left special factor of length $n$ for all
  $n \geq 0$. The language $\Lang(\alpha)$ is mirror-invariant, that is, for every $w \in \Lang(\alpha)$ also
  $\mirror{w} \in \Lang(\alpha)$. It follows that the right special factor of length $n$ is the reversal of the left
  special factor of length $n$.

  Given the continued fraction expansion of an irrational $\alpha \in (0,1)$, we define the corresponding
  \emph{standard sequence} $(s_k)_{k\geq 0}$ of words by
  \begin{alignat*}{5}
    s_{-1} = 1, \qquad& s_0 = 0, \qquad& s_1 = s_0^{a_1-1}s_{-1}, \qquad& s_k = s_{k-1}^{a_k}s_{k-2}, \qquad&k \geq 2.
  \end{alignat*}
  As $s_k$ is a prefix of $s_{k+1}$ for $k \geq 1$, the sequence $(s_k)$ converges to a unique infinite word $c_\alpha$
  called the infinite standard Sturmian word of slope $\alpha$, and it equals $s_{\alpha,\alpha}$. Inspired by the
  notion of semiconvergents, we define \emph{semistandard} words for $k \geq 2$ by
  \begin{align*}
    s_{k,l} = s_{k-1}^l s_{k-2}
  \end{align*}
  with $1 \leq l < a_k$. Clearly $|s_k| = q_k$ and $|s_{k,l}| = q_{k,l}$. Every prefix of $c_\alpha$ is left special,
  so in particular, standard and semistandard words are left special. Every standard or semistandard word is primitive
  \cite[Proposition~2.2.3]{2002:algebraic_combinatorics_on_words}. An important property of standard words is that the
  words $s_k$ and $s_{k-1}$ almost commute; namely $s_k s_{k-1} = wab$ and $s_{k-1} s_k = wba$ for some word $w$ and
  distinct letters $a$ and $b$. For more information about standard words see Chapter 2 of
  \cite{2002:algebraic_combinatorics_on_words} and Berstel's paper \cite{1999:on_the_index_of_sturmian_words}. Here we
  see that the only difference between the words $c_\alpha$ and $c_{\overline{\alpha}}$ where
  $\alpha = [0; 1, a_2, a_3, \ldots]$ and $\overline{\alpha} = [0; a_2 + 1, a_3, \ldots]$ is that the roles of the
  letters $0$ and $1$ are reversed. Thus for the study of powers, we may assume without loss of generality that
  $a_1 \geq 2$.

  For the rest of this paper we make the convention that the partial quotients of an irrational $\alpha$ satisfy
  $a_0 = 0$ and $a_1 \geq 2$, that is, $0 < \alpha < \frac{1}{2}$. Moreover, the words $s_k$ and $s_{k,l}$ refer to the
  standard or semistandard words of slope $\alpha$.

  \section{The Main Results}\label{sec:main_results}
  This section presents a complete description of powers occurring in a Sturmian word with slope $\alpha$. As a
  side-product, in \autoref{thm:conjugate_interval_lengths} we obtain a description of conjugacy classes of length
  $q_{k,l}$. Finally, as an easy consequence of the established results, we obtain a formula for the fractional index
  of a Sturmian word (\autoref{thm:index_formula}).

  The following important proposition shows the usefulness of \autoref{prp:closest} in the study of Sturmian words and
  plays a role similar to Theorem~1 of \cite{2003:powers_in_sturmian_sequences}.

  \begin{proposition}\label{prp:square_length}
    If $w^2 \in \Lang(\alpha)$ with $w$ primitive, then $|w| = q_k$ for some $k \geq 0$ or $|w| = q_{k,l}$ for some
    $k \geq 2$ with $0 < l < a_k$.
  \end{proposition}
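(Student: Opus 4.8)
The plan is to connect the word-combinatorial statement about squares to the dynamical picture via the interval $[w^2]$. The key observation is this: if $w^2 \in \Lang(\alpha)$ with $|w| = n$, then both $w$ and $ww$ are factors, and the interval $[w^2] = [w] \cap R^{-n}([w])$ is nonempty. Geometrically, $x \in [w^2]$ means that the orbit point $\{-n\alpha\}$ (equivalently, a rotation by $n$) keeps us inside the set of points whose coding begins with $w$. Since the cylinder $[w]$ has length $|[w]|$ and $R^n$ translates by $\{n\alpha\}$, the condition $[w]\cap R^{-n}([w])\neq\emptyset$ forces $\|n\alpha\|$ to be small relative to $|[w]|$; more precisely, the coding of the orbit of a single point does not change over the first $n$ steps of a second pass, which pins down how $\{n\alpha\}$ sits among the partition points $0,\{-\alpha\},\dots,\{-n\alpha\}$.

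The concrete route I would take is to translate ``$w$ occurs twice consecutively'' into a statement about return words / the partition from the Three Distance Theorem. First I would fix $n=|w|$ and locate $n$ in the canonical form $n = lq_{k-1}+q_{k-2}+r$ with $0\le r<q_{k-1}$ guaranteed by \autoref{thm:three_distance}. The cylinder $[w]$ of length $n$ is one of the $n+1$ intervals cut out by $0,\{-\alpha\},\ldots,\{-n\alpha\}$, and by the Three Distance Theorem these intervals have only the lengths $\|q_{k-1}\alpha\|$, $\|q_{k,l}\alpha\|$, or $\|q_{k,l-1}\alpha\|$. The square condition says $\{-n\alpha\}$ lands inside $[w]$ (or, dually, that applying $R^n$ does not leave $[w]$), so that the translate $\{n\alpha\}$ has norm $\|n\alpha\|$ smaller than the length $|[w]|$ of the hosting interval. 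This is exactly the setup of \autoref{prp:closest}: with $0<n$ we would be claiming $\|n\alpha\|$ is smaller than some $\|q_{k,l-1}\alpha\|$-type quantity, and the proposition then forces $n$ to be a multiple $mq_{k-1}$ of $q_{k-1}$, in fact with $1\le m\le\min\{l,a_k-l+1\}$.

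From the divisibility $n = mq_{k-1}$ I would finish by using primitivity of $w$. The word $w$ of length $n=mq_{k-1}$ that arises as a square must itself be essentially a conjugate of a power of the standard/semistandard building block of length $q_{k-1}$; primitivity then kills the multiplicities. Concretely, if $m\ge 2$ then the occurrence of $w^2$ would exhibit $w$ as a genuine power of a shorter factor of length dividing $n$, contradicting that $w$ is primitive—unless $m=1$, i.e. $n=q_{k-1}$, which after reindexing is $n=q_k$. The remaining case is when the hosting interval is not the minimal one, which corresponds to $n$ being a semiconvergent denominator $q_{k,l}$ with $0<l<a_k$; here I would check directly that the square cylinder is nonempty precisely for these lengths, matching the second alternative in the statement.

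The main obstacle I anticipate is making the passage between ``$w^2\in\Lang(\alpha)$'' and the clean inequality ``$\|n\alpha\|<\|q_{k,l-1}\alpha\|$'' fully rigorous, because one must argue that the relevant cylinder length is at least $\|n\alpha\|$ and identify which of the three Three-Distance lengths is in play. The endpoint conventions for $I_0,I_1$ and the handling of the special orbit points $\{-n\alpha\}$ introduce boundary bookkeeping that is easy to get slightly wrong. Once the correct interval length is identified, \autoref{prp:closest} does the heavy lifting and the divisibility conclusion together with primitivity should close the argument quickly; translating the bound $m\le\min\{l,a_k-l+1\}$ back into the constraint $0<l<a_k$ on admissible semiconvergent lengths is the final reconciling step.
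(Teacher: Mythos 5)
Your reduction to \autoref{prp:closest} does not go through as written, and the place where it breaks is exactly where the paper has to work hardest. Write $n=|w|$ and $n=lq_{k-1}+q_{k-2}+r$ as in \autoref{thm:three_distance}, so that $q_{k,l}\le n<q_{k,l+1}$. From $[w^2]=[w]\cap R^{-n}([w])\neq\emptyset$ you correctly extract $\|n\alpha\|<|[w]|$, hence at best $\|n\alpha\|<\|q_{k,l-1}\alpha\|$. But \autoref{prp:closest} with parameters $(k,l)$ requires $n<q_{k,l}$, which fails here, and with parameters $(k,l+1)$ it requires the strictly stronger inequality $\|n\alpha\|<\|q_{k,l}\alpha\|$, which you have not established. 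More fundamentally, no inequality of this type can suffice on its own: the condition ``$\|n\alpha\|<|[w]|$ for some level-$n$ cylinder $[w]$'' is \emph{equivalent} to the existence of \emph{some} factor of length $n$ whose square lies in $\Lang(\alpha)$, and such factors exist for lengths $n=mq_{k-1}$ with $m\ge2$ as well: for instance, if $a_2\ge3$, then $w=s_1^2$ has $w^2=s_1^4\in\Lang(\alpha)$ and $|w|=2q_1$, a length that is neither a $q_k$ nor a $q_{k,l}$. So primitivity must do genuine work, and your inequality cannot see it.

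That work is precisely what your third paragraph assumes away. The assertion that $w^2\in\Lang(\alpha)$ with $|w|=mq_{k-1}$, $m\ge2$, ``exhibits $w$ as a genuine power of a shorter factor'' is not a consequence of anything you have proved; it is essentially case \emph{(vi)} of \autoref{thm:power_characterization}, i.e.\ part of what this whole section is building towards, so invoking it here is circular. The paper deploys primitivity differently and earlier: assuming without loss of generality that $w$ is right special, it writes $[w]=I(-i\alpha,-j\alpha)$, $[w^2]=I(-i\alpha,-(j+n)\alpha)$, and proves $j=1$ by a word-combinatorial argument --- if $j\neq1$, the longest common prefix of two suitably chosen codings is a right special word $u$ with $n<|u|<2n$, which forces a third occurrence of $w$ inside $w^2$, contradicting primitivity. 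The payoff is the much stronger geometric fact that $\{-n\alpha\}$ is \emph{adjacent} to $0$ in the level-$n$ partition, not merely that $\|n\alpha\|$ is smaller than some cylinder length. Adjacency instantly rules out $n=mq_{k-1}$ with $m\ge2$ (the point $\{-q_{k-1}\alpha\}$ would lie strictly between $0$ and $\{-n\alpha\}$), and only then do \eqref{eq:min_distance} and \autoref{prp:closest} pin $n$ down to $q_{k-1}$ or $q_{k,l}$. To repair your proof you would need to supply an argument of this kind; the endpoint bookkeeping you flag at the end is not the real obstacle.
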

  \begin{proof}
    Let $n = |w|$. If $n < q_1 = a_1$, then the factors of length $n$ are readily seen to be $0^n$ and the conjugates
    of $0^{n-1}1$. Since the minimum number of letters $0$ between two occurrences of letter $1$ in $\Lang(\alpha)$ is
    $a_1 - 1$ and the maximum number is $a_1$, the only way $w^2$ can be a factor is that $w = 0 = s_0$. Suppose then
    that $n \geq q_1$ and $[w] = I(-i\alpha, -j\alpha)$ with $0 \leq i,j \leq n$. We may assume without loss of
    generality that $w$ is right special, so $\{-(n+1)\alpha\} \in [w]$. Further, since
    $[w^2] = [w] \cap R^{-n}([w]) \neq \emptyset$, then necessarily (depending on $n$) either
    $[w^2] = I(-i\alpha, -(j+n)\alpha)$ or $[w^2] = I(-j\alpha, -(i+n)\alpha)$. We assume that
    $[w^2] = I(-i\alpha, -(j+n)\alpha)$; the other case is symmetric. We wish to prove that the points $\{-(n+1)\alpha$
    and $\{-(j+n)\alpha\}$ are actually the same point. This is equivalent to saying that $j = 1$. Assume on the
    contrary that $j \neq 1$. Let $a$ be the first letter of $w$ and $b$ be a letter such that $b \neq a$. Note that
    $[w^2] \subset [wa]$. Now as $w$ is right special, $[wa] = I(-j\alpha, -(n+1)\alpha)$ and
    $[wb] = I(-(n+1)\alpha, -i\alpha)$. Let $x \in [w^2]$ and $y \in [wa]\setminus[w^2]$, and let $u$ be the longest
    common prefix of $s_{x,\alpha}$ and $s_{y,\alpha}$. Since $[w^2] \neq [wa]$, we have that $|u| < 2|w|$. Moreover,
    $u$ is right special, so $w$ is a suffix of $u$. However, $w^2$ is a prefix of $s_{x,\alpha}$ implying that $u$ is
    a prefix of $w^2$. Thus $w^2$ contains at least three occurrences of $w$ contradicting the primitivity of $w$. From
    this contradiction we conclude that $j = 1$. There are no points $\{-m\alpha\}$ in the interval
    $I(-(j+n)\alpha, -j\alpha) = I(-(n+1)\alpha, -\alpha)$ with $m \leq n$. Therefore the point $\{-n\alpha\}$ is the
    closest point to $0$ from either side. If $q_1 \leq n < q_{2,1}$, then it must be that $n = q_1$. Otherwise let
    $k \geq 2$ be such that $q_{k,l} \leq n < q_{k,l+1}$ with $0 < l \leq a_k$. By \autoref{prp:closest} either
    $n = q_{k-1}$ or $n = q_{k,l}$ proving the claim.
  \end{proof}

  Indeed, for each length given in the statement of the previous proposition, there exists a factor occurring as a
  square.

  \begin{lemma}\label{lem:standard_square}
    We have that $s_0^2, s_1^2 \in \Lang(\alpha)$ and $s_{k,l}^2 \in \Lang(\alpha)$ for all $k \geq 2$ and $l$ such
    that $0 < l \leq a_k$.
  \end{lemma}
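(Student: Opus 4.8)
The plan is to translate the statement ``$w^2 \in \Lang(\alpha)$'' into the geometric assertion that the interval $[w]$ overlaps its own rotated copy. Since $[w^2] = [w] \cap R^{-|w|}([w])$ and $R^{-|w|}$ is a rigid rotation moving every point a distance $\||w|\alpha\|$, the arc $[w]$ and its translate meet precisely when $\||w|\alpha\| < |[w]|$ (the intersection then being an arc of length $|[w]| - \||w|\alpha\|$). Thus for each of $s_0$, $s_1$, $s_{k,l}$ it suffices to bound the length of its interval below by the rotation distance attached to its length. Throughout I use that standard and semistandard words are prefixes of $c_\alpha = s_{\alpha,\alpha}$, so that $\alpha$ lies in the interior of $[w]$, and hence $[w]$ is the interval of the partition by $0, \{-\alpha\}, \ldots, \{-|w|\alpha\}$ containing $\alpha$.

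First I would dispose of the two base cases. For $s_0 = 0$ the interval is $I_0$, of length $1-\alpha$, while the rotation distance is $\|q_0\alpha\| = \alpha$; as $\alpha < \tfrac12$ we get $\alpha < 1-\alpha$ and so $s_0^2 \in \Lang(\alpha)$. For $s_1$, computing the two neighbours of $\alpha$ among $\{-i\alpha\}$, $0 \le i \le a_1$, shows $[s_1]$ has length $\|q_0\alpha\| = \alpha$, whereas the rotation distance is $\|q_1\alpha\| = 1 - a_1\alpha$; the required inequality $1 - a_1\alpha < \alpha$ is just $(a_1+1)\alpha > 1$, which holds because $a_1 = \lfloor 1/\alpha\rfloor$.

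The heart of the argument is the length of $[s_{k,l}]$ for $k \ge 2$ and $0 < l \le a_k$. I claim the two endpoints of $[s_{k,l}]$ are the neighbours of $\alpha$ realised by the indices $q_{k-1}$ and $q_{k,l}$, on opposite sides of $\alpha$, so that
\begin{align*}
  |[s_{k,l}]| = \|q_{k-1}\alpha\| + \|q_{k,l}\alpha\| = \|q_{k,l-1}\alpha\|,
\end{align*}
the last equality being \eqref{eq:distance_difference}. On one side this is immediate from \eqref{eq:min_distance}, since among the indices below $q_k$ the point closest to $0$ is $\{q_{k-1}\alpha\}$ and $q_{k-1} \le q_{k,l}$. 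On the other side I would invoke \autoref{prp:closest}: any index $m < q_{k,l}$ whose point beats $\|q_{k,l-1}\alpha\|$ must be a multiple of $q_{k-1}$, and these points lie on the side of $\{q_{k-1}\alpha\}$, so on the opposite side nothing beats $\{q_{k,l}\alpha\}$. Granting the length, the rotation distance for the square is $\|q_{k,l}\alpha\|$, and \eqref{eq:distance_difference} yields $\|q_{k,l}\alpha\| = \|q_{k,l-1}\alpha\| - \|q_{k-1}\alpha\| < |[s_{k,l}]|$, whence $[s_{k,l}^2] \neq \emptyset$.

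The step I expect to be the main obstacle is the rigorous identification of the $\{q_{k,l}\alpha\}$-side neighbour, that is, ruling out a closer point on that side among the indices up to $q_{k,l}+1$. This is exactly where \autoref{prp:closest} does the work, forcing every dangerously close index to be a multiple of $q_{k-1}$ and hence to sit on the wrong side; the only genuine care needed is the single boundary index $q_{k,l}+1$, which one dispatches via \eqref{eq:min_distance} and the uniqueness of best approximations (\autoref{prp:best_approximation_convergent}), treating the cases $l < a_k$ and $l = a_k$ separately.
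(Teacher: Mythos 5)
Your overall strategy is sound and it is genuinely different from the paper's. The paper disposes of this lemma in four lines of pure word combinatorics: since $s_{k+1}s_k$ and $s_ks_{k+1}$ differ only in their last two letters, $s_k^2$ is a prefix of $s_{k+1}s_k$ and hence a factor; and since $s_{k,l}$ is both a prefix and a suffix of $s_k$, the square $s_k^2$ contains $s_{k,l}^2$. You instead prove geometrically that $|[s_{k,l}]| = \|q_{k-1}\alpha\| + \|q_{k,l}\alpha\| = \|q_{k,l-1}\alpha\| > \|q_{k,l}\alpha\|$, so that $[s_{k,l}]$ meets $R^{-q_{k,l}}([s_{k,l}])$. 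That endpoint identification is true---it is precisely what the paper establishes only later, inside the proof of \autoref{thm:conjugate_interval_lengths}, where $[s_{k,l}] = R(M \cup L)$---and you are right that you must derive it from scratch, since the proof of \autoref{thm:conjugate_interval_lengths} cites the present lemma; your base cases and your use of \autoref{prp:closest} to clear the $\{q_{k-2}\alpha\}$ side of indices below $q_{k,l}$ are correct. The cost is an argument considerably longer than the paper's; the benefit is that it yields the interval length of the left special factor as a by-product.

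There is, however, a genuine gap at exactly the step you flag: the boundary index $j = q_{k,l}+1$ is not dispatched by the tools you name. When $l < a_k$ one has $q_{k,l}+1 < q_k$ (because $q_{k-1} \geq 2$), so \eqref{eq:min_distance} and the uniqueness of best approximations give only $\|(q_{k,l}+1)\alpha\| > \|q_{k-1}\alpha\|$. That is not enough: if the point $\{(q_{k,l}+1)\alpha\}$ were to lie on the $\{q_{k-2}\alpha\}$ side, the threshold it must fail to beat is $\|q_{k,l}\alpha\|$, which by \eqref{eq:min_distance} is \emph{strictly larger} than $\|q_{k-1}\alpha\|$ for $l < a_k$, and best approximations say nothing about this intermediate range. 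The repair stays inside your framework but requires \autoref{prp:closest} a second time, with shifted parameters. For $l < a_k$: since $q_{k,l}+1 < q_{k,l} + q_{k-1} = q_{k,l+1}$, the proposition applied with $l+1$ in place of $l$ shows that $\|(q_{k,l}+1)\alpha\| < \|q_{k,l}\alpha\|$ would force $q_{k,l}+1 = mq_{k-1}$, whence the point lies on the $\{q_{k-1}\alpha\}$ side (as $m\|q_{k-1}\alpha\| \leq a_k\|q_{k-1}\alpha\| < \|q_{k-2}\alpha\| < \tfrac12$) at distance $m\|q_{k-1}\alpha\| \geq 2\|q_{k-1}\alpha\|$; either way it cannot intrude. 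For $l = a_k$: \eqref{eq:min_distance} at level $k+1$ gives $\|(q_k+1)\alpha\| > \|q_k\alpha\|$, settling the $\{q_k\alpha\}$ side, but on the $\{q_{k-1}\alpha\}$ side you need $\|(q_k+1)\alpha\| > \|q_{k-1}\alpha\|$, which follows only from \autoref{prp:closest} with parameters $(k+1,1)$ (legitimate since $q_k+1 < q_k + q_{k-1} = q_{k+1,1}$, and $q_k+1$ is not a multiple of $q_k$). With these two extra invocations your proof closes; as written, it does not.
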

  \begin{proof}
    As $s_0^2 = 0^2$ and $s_1^2 = (0^{a_1 - 1}1)^2$, clearly $s_0^2, s_1^2 \in \Lang(\alpha)$. Since the words
    $s_{k+1} s_k$ and $s_k s_{k+1}$ differ only by their last two letters, it follows that $s_k^2$ is a prefix of
    $s_{k+1} s_k$ if $k \geq 2$. As $s_k$ is a prefix of $s_{k+1}$ when $k \geq 0$, the word
    $s_{k,l} = s_{k-1}^l s_{k-2}$ is both a prefix and a suffix of $s_k = s_{k-1}^{a_k} s_{k-2}$ for all $k \geq 2$ and
    $l$ such that $0 < l \leq a_k$. Thus $s_k^2$ contains $s_{k,l}^2$. The claim follows.
  \end{proof}

  As was seen in the proof of \autoref{prp:square_length}, the index of a factor $w$ of length $n$ depends only on the
  maximum $r \geq 0$ such that $R^{-tn}(x) \in [w]$ for $0 \leq t \leq r$ where $x$ is either of the endpoints of
  $[w]$. That is, the index of a factor depends only on the length of its interval but not on its position. To put it
  more precisely, if $w$ is a factor of length $n$, then
  \begin{align}\label{eq:index}
    \ind{w} = \gamma + \left\lfloor \frac{|[w]|}{\||w|\alpha\|} \right\rfloor
  \end{align}
  where $\gamma$ is $1$ if $|[w]| \neq \||w|\alpha\|$ and $0$ otherwise. Next we will carefully characterize the
  lengths of the intervals of factors of length $q_{k,l}$. After this it is easy to conclude the main results.

  \begin{theorem}\label{thm:conjugate_interval_lengths}
    Let $n = q_{k,l}$ with $k \geq 2$ and $0 < l \leq a_k$. Then $C^i(\mirror{s}_{k,l}) \in \Lang(\alpha)$ for
    $0 \leq i \leq n - 1$. The intervals of the first $q_{k-1} - 1$ conjugates of $\mirror{s}_{k,l}$ have length
    $\|q_{k,l-1}\alpha\|$, and the intervals of the latter $n + 1 - q_{k-1}$ conjugates have length
    $\|q_{k-1}\alpha\|$. The interval of the remaining factor has length $\|q_{k,l}\alpha\|$.
  \end{theorem}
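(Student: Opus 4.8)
The plan is to combine four ingredients: mirror-invariance of $\Lang(\alpha)$ to produce the conjugates as factors; the Three Distance Theorem to fix the multiset of interval lengths; a short dynamical computation to show the unique shortest interval belongs to the non-conjugate factor; and finally a length-transfer principle between consecutive conjugates whose only admissible jumps sit at the unique left and right special factors of length $n$. Throughout write $w = \mirror{s}_{k,l}$ and $n = q_{k,l}$.

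First I would record the structural facts. Since $s_{k,l}$ is primitive, so is $w$, and $w$ has exactly $n$ distinct conjugates. By \autoref{lem:standard_square} we have $s_{k,l}^2 \in \Lang(\alpha)$, so by mirror-invariance $w^2 \in \Lang(\alpha)$ (reversal of $s_{k,l}^2$ is $(\mirror{s}_{k,l})^2 = w^2$); as every conjugate of $w$ occurs in $w^2$, we get $C^i(w) \in \Lang(\alpha)$ for $0 \le i \le n-1$, and these are the $n$ distinct conjugates. Since there are $n+1$ factors of length $n$, exactly one factor of length $n$ is not a conjugate of $w$. The $n+1$ factor-intervals are the images under the isometry $x \mapsto \{-x\}$ of the partition of $\T$ by $0,\{\alpha\},\dots,\{n\alpha\}$, so by \autoref{thm:three_distance} with $n = q_{k,l}$ (i.e. $r=0$) their lengths are: $q_{k-1}-1$ of length $\|q_{k,l-1}\alpha\|$ (longest), $n+1-q_{k-1}$ of length $\|q_{k-1}\alpha\|$, and one of length $\|q_{k,l}\alpha\|$ (shortest).

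Next I would pin down $w$ and locate the short interval. As $s_{k,l}$ is the unique left special factor of length $n$, its reversal $w$ is the unique right special factor, so $\{-(n+1)\alpha\} \in [w]$ and $[w]$ is precisely the interval subdivided when passing from length $n$ to $n+1$. By \autoref{thm:three_distance} (and the remark that the longest intervals are the sums of the two shorter ones), this subdivided interval is a longest one, of length $\|q_{k,l-1}\alpha\|$, splitting into pieces of lengths $\|q_{k-1}\alpha\|$ and $\|q_{k,l}\alpha\|$. Hence $|[w^2]| = |[w]| - \|n\alpha\| = \|q_{k,l-1}\alpha\| - \|q_{k,l}\alpha\| = \|q_{k-1}\alpha\|$ by \eqref{eq:distance_difference}. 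For each $i$ the rotated interval $R^i([w^2])$ is contained in the interval of the conjugate read from position $i$ of $w^2$, and as $i$ ranges over $0,\dots,n-1$ these exhaust all conjugates; since $|R^i([w^2])| = \|q_{k-1}\alpha\| > \|q_{k,l}\alpha\|$, no conjugate interval is the shortest one. Therefore the unique shortest interval is the non-conjugate factor, and the $q_{k-1}-1$ longest and the $n+1-q_{k-1}$ middle intervals all belong to conjugates.

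The crux is the cyclic order, and this is where I expect the main obstacle. Reading $w^2$ from successive positions, $R$ carries the interval of one conjugate to the next; I would prove the transfer principle that for consecutive conjugates $u \to u'$ (drop the first letter $a$ of $u$ and append it) one has $[u'] = R([ua])$, with $[ua] = [u]$ unless $u$ is right special, and $[u'] \supsetneq R([ua])$ exactly when $u'$ is left special. Thus the interval length is constant around the cycle except that it can drop only immediately after the right special conjugate $w$ and rise only immediately before the left special conjugate. Since there is a unique right special and a unique left special factor of length $n$, and the length genuinely takes the two values $\|q_{k,l-1}\alpha\|$ and $\|q_{k-1}\alpha\|$ on conjugates (the short branch being excluded above), the cycle has exactly one drop (at $w$) and exactly one rise (forcing the left special factor to itself be a conjugate of $w$). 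The longest intervals therefore form one contiguous arc of the cycle containing $w$, and its length $q_{k-1}-1$ pins the left special conjugate to occurrence-position $n-q_{k-1}+2$. Finally, translating occurrence order into $C$-order (the conjugate read at position $n-i$ is $C^i(w)$ for $1 \le i \le n-1$, and $w = C^0(w)$) places the $q_{k-1}-1$ longest intervals at $C^0(w),\dots,C^{q_{k-1}-2}(w)$ and the $n+1-q_{k-1}$ middle intervals at $C^{q_{k-1}-1}(w),\dots,C^{n-1}(w)$, as claimed. The hard part is making the transfer principle and this occurrence-to-$C$-order bookkeeping precise; the length arithmetic is routine via \eqref{eq:distance_difference}.
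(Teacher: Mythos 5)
Your outline gets several things right (mirror-invariance plus \autoref{lem:standard_square} to realize all $n$ conjugates as factors, the Three Distance Theorem for the multiset of lengths, and the rotation-to-conjugation transfer principle, which is correct and close to what the paper actually does), but there is a genuine error at the step you dismiss as routine length arithmetic, and the entire assignment of lengths to conjugates hinges on it. You assert $\|q_{k-1}\alpha\| > \|q_{k,l}\alpha\|$, so that the unique interval of length $\|q_{k,l}\alpha\|$ is the shortest one and therefore cannot contain a rotated copy of $[w^2]$. That inequality holds only in the extreme case $l = a_k$ (where $q_{k,l} = q_k$). For $0 < l < a_k$ we have $q_{k-1} < q_{k,l} < q_k$, so \eqref{eq:min_distance} gives exactly the opposite, $\|q_{k,l}\alpha\| > \|q_{k-1}\alpha\|$: the unique interval is the \emph{middle} one, and the shortest length $\|q_{k-1}\alpha\|$ occurs $n+1-q_{k-1}$ times --- on conjugates, according to the very statement being proved. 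The paper's own example makes this concrete: for $\alpha = [0;\overline{2,1}]$ and $n = q_{3,1} = 5$ one has $\|q_{k-1}\alpha\| = \|3\alpha\| \approx 0.098$ but $\|q_{k,l}\alpha\| = \|5\alpha\| \approx 0.170$. Consequently, when $l < a_k$ your containment argument excludes nothing: an interval of length $\|q_{k,l}\alpha\|$ can perfectly well contain a set of length $\|q_{k-1}\alpha\|$, so you have not shown that the exceptional factor (rather than some conjugate) owns that interval. Your final paragraph explicitly invokes ``the short branch being excluded above'' to know that conjugates take exactly the two values $\|q_{k,l-1}\alpha\|$ and $\|q_{k-1}\alpha\|$, so the cycle argument inherits the gap; note also that your intermediate claim, read literally (``the unique shortest interval is the non-conjugate factor''), is actually false for $l < a_k$, since then the non-conjugate factor has the middle length.

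What is missing is a non-metric way to expel the exceptional factor from the conjugacy class. The paper does this with an abelian (letter-counting) argument: the factor $v$ whose interval is $I(-n\alpha,0)$ --- of length $\|q_{k,l}\alpha\|$ --- has first letter determined by which side of $0$ its interval lies on, and a parity argument shows this letter differs from the last letter of $s_{k,l}$; since the length-$(n-1)$ suffix of $v$ is a prefix of $s_{k,l}$, the word $v$ then has one more occurrence of one letter than $s_{k,l}$, hence is not conjugate to it. Something of this kind (or the paper's overall route, which uses \autoref{prp:closest} to track the backward orbit of $J = I(-q_{k,l-1}\alpha,0)$ and never needs to compare $\|q_{k,l}\alpha\|$ with $\|q_{k-1}\alpha\|$) must be inserted at your exclusion step. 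With that repaired, the rest of your plan --- the transfer principle and the occurrence-to-$C$-order bookkeeping, which you flagged as the hard part --- is sound and would complete the proof.
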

  \begin{proof}
    The geometric ideas of this proof are illustrated in the example following this proof. The intervals of the factors
    of length $n$ are called in this proof \emph{level $n$ intervals}. With the same effort we prove here more than
    what is claimed above; we give the exact positions of the intervals of the conjugates of $\mirror{s}_{k,l}$ on the
    circle.

    By \autoref{prp:closest} the interval $J = I(-q_{k,l-1}\alpha, 0)$ has exactly one point $\{-t\alpha\}$ with
    $0 < t \leq n$ as an interior point; namely the point $\{-n\alpha\}$. That is, the point $\{-n\alpha\}$ split the
    level $n-1$ interval $J$ into the level $n$ intervals $K = I(-q_{k,l-1}\alpha, -n\alpha)$ and $L = I(-n\alpha, 0)$.
    Observe that $\|q_{k,l-1}\alpha\| = |J| = |K| + |L| = \|q_{k-1}\alpha\| + \|q_{k,l}\alpha\|$. The Three Distance
    Theorem tells that level $n$ intervals have lengths $\|q_{k,l-1}\alpha\|, \|q_{k,l}\alpha\|,$ and
    $\|q_{k-1}\alpha\|$. In particular, interval $L$ is the unique level $n$ interval of length $\|q_{k,l}\alpha\|$.
    Let $i$ be the smallest positive integer such that the interval $R^{-i}(J)$ is not any interval of level $n$. The
    interval $R^{-i}(J)$ must be a union of two level $n$ intervals: one having length $|K|$ and the other having
    length $|L|$. This is true as by \eqref{eq:distance_difference} it can be deduced that $|J|$ is never a multiple of
    $|K|$; further, $|K|$ is never a multiple of $|L|$. Since the interval of length $\|q_{k,l}\alpha\|$ is unique, we
    conclude that the other interval in the union is $L$. As $\alpha$ is irrational, $R^{-i}(J) \neq J$, so it must be
    that $R^{-i}(J) = M \cup L$ where $M = I(-q_{k-1}\alpha, 0)$. Therefore $R^{-i}$ maps the endpoint $0$ of $L$ to
    the endpoint $\{-q_{k-1}\alpha\}$ of $M$, so $i = q_{k-1}$. As $k > 1$, also $i > 1$. We have shown that the level
    $n$ intervals $R^{-1}(J), R^{-2}(J), \ldots, R^{-(i-1)}(J)$ have length $|J| = \|q_{k-1,l}\alpha\|$. By the Three
    Distance Theorem the remaining $n+1-q_{k-1}$ intervals excluding $L$ have length $\|q_{k-1}\alpha\|$.

    What remains is to analyze the connection between rotation and conjugation. Let $u$ and $v$ be factors of length
    $n$ such that $[u] = M$ and $[v] = L$. Since the intervals $M$ and $L$ are on the opposite sides of $0$, we have
    that $u = au'$ and $v = bv'$ for distinct letters $a$ and $b$. Let $x \in M$ and $y \in L$. Since $i > 1$, the
    interval $R^{-(i-1)}(J) = R(M \cup L)$ is the interval of some factor $w$ of length $n$. Therefore the Sturmian
    words $s_{x+\alpha,\alpha}$ and $s_{y+\alpha,\alpha}$ both have $w$ as a prefix. Thus $s_{x,\alpha}$ begins with
    $aw$ and $s_{y,\alpha}$ begins with $bw$. Hence $w$ must be left special, that is, $w = s_{k,l}$. We will show next
    that $v$ is not conjugate to $s_{k,l}$. Note that $k-1$ is odd if and only if $\{-q_{k-1}\alpha\} \in I_0$. Hence
    the first letter of $v$ is $0$ if and only if $k-1$ is odd. On the other hand, the last letter of $s_{k,l}$ is $0$
    if and only if $k-1$ is even. Thus we conclude that the first letter of $v$ is distinct from the last letter of
    $s_{k,l}$. However, as the suffix of $v$ of length $n-1$ is a prefix of $s_{k,l}$, we see that there are more
    letters $b$ in $v$ than there are in $s_{k,l}$, so $v$ and $s_{k,l}$ cannot be conjugate.
 
    Let then $z$ be the factor of length $n$ such that $[z] = R^{-1}(J)$. Since $\{-n\alpha\} \in J$, it must be that
    $\{-(n+1)\alpha\} \in R^{-1}(J) = [z]$. Thus $z$ is right special, that is, $z = \mirror{s}_{k,l}$. By
    \autoref{lem:standard_square} $s_{k,l}^2 \in \Lang(\alpha)$. Thus every conjugate of $s_{k,l}$ is a factor.
    Further, by the mirror-invariance of $\Lang(\alpha)$, we see that $s_{k,l}$ and $\mirror{s}_{k,l}$ are conjugates.
    Moreover, every conjugate of $\mirror{s}_{k,l}$ is extended to the left by its last letter.

    Suppose that $\lambda \neq v$ is a factor of length $n$ such that $R^{-1}([\lambda])$ is the interval of some
    factor $\mu$ of length $n$. As $R^{-1}([s_{k,l}])$ does not satisfy this condition, it follows that
    $\lambda \neq s_{k,l}$, so $\lambda$ extends to the left uniquely. We will prove that $C(\lambda) = \mu$. Write
    $\lambda = \lambda' c$ for some letter $c$. Then obviously $\mu = d\lambda'$ for some letter $d$. By definition
    $\mu$ must be followed by the letter $c$, that is, $\mu c = d\lambda' c = d \lambda \in \Lang(\alpha)$. We have
    that $d = c$ because $\lambda$ is uniquely extended to the left by its last letter. Therefore we conclude that
    $C(\lambda) = \mu$. In this way we see that the factors of length $n$ having the intervals
    $R^{-1}(J), R^{-2}(J), \ldots, R^{-(i-1)}(J)$ correspond (in order) to the factors
    $\mirror{s}_{k,l}, C(\mirror{s}_{k,l}), \ldots, C^{q_{k-1} - 2}(\mirror{s}_{k,l}) = s_{k,l}$. We saw above that $v$
    is not conjugate to $s_{k,l}$, so it must be that $C(s_{k,l}) = u$. Thus the factors of length $n$ having the
    intervals $[u] = L, R^{-1}(L), R^{-2}(L), \ldots, R^{-(n-q_{k-1})}(L)$ correspond (in order) to the factors
    $u, C(u), C^2(u), \ldots, C^{n-q_{k-1}}(u)$. As $u = C(s_{k,l}) = C^{q_{k-1} - 1}(\mirror{s}_{k,l})$, we have a
    complete description of the positions of the intervals of conjugates of $\mirror{s}_{k,l}$ using the backward orbit
    of $J$ under $R$.
  \end{proof}

  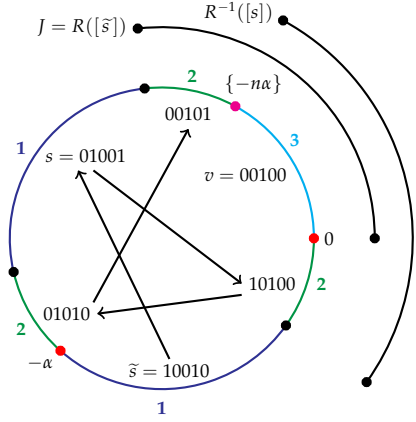
\begin{figure}
  \centering
  \begin{tikzpicture}
    \definecolor{pink}{RGB}{236,0,140}
    \definecolor{bluish}{RGB}{46,49,146}
    \definecolor{cyanish}{RGB}{0,174,239}
    \definecolor{greenish}{RGB}{0,148,69}
    \tikzstyle{own}=[line width=0.8pt]
    \tikzstyle{own2}=[line width=2.0pt,black]
    \tikzstyle{type1}=[line width=0.8pt,bluish,text=bluish,font=\bf]
    \tikzstyle{type2}=[line width=0.8pt,greenish,text=greenish,font=\bf]
    \tikzstyle{type3}=[line width=0.8pt,cyanish,text=cyanish,font=\bf]
    % The formula for finding angle of the point {-na} is 2*{-na)*180.
    %\draw[own] (0,0) circle (2.0);
    \draw[type3] (0.0:2.0) arc (0.0:61.1:2.0);
    \draw[type2] (61.1:2.0) arc (61.1:96.5:2.0);
    \draw[type1] (96.5:2.0) arc (96.5:192.9:2.0);
    \draw[type2] (192.9:2.0) arc (192.9:228.2:2.0);
    \draw[type1] (228.2:2.0) arc (228.2:324.7:2.0);
    \draw[type2] (324.7:2.0) arc (324.7:360:2.0);
    \filldraw[own2,red] (0:2.0) circle(0.9pt);
    \filldraw[own2,pink] (61.1:2.0) circle(0.9pt);
    \filldraw[own2] (96.5:2.0) circle(0.9pt);
    \filldraw[own2,red] (228.2:2.0) circle(0.9pt);
    \filldraw[own2] (192.9:2.0) circle(0.9pt);
    \filldraw[own2] (324.7:2.0) circle(0.9pt);
    \draw[own] (0:2.8) arc (0:96.5:2.8);
    \draw[own] (-35.3:3.3) arc (-35.3:61.1:3.3);
    \filldraw[own2] (0:2.8) circle(0.9pt);
    \filldraw[own2] (96.5:2.8) circle(0.9pt);
    \filldraw[own2] (-35.3:3.3) circle(0.9pt);
    \filldraw[own2] (61.1:3.3) circle(0.9pt);
    \node at (2.2,0) {\scriptsize{$0$}};
    \node at (-1.6,-1.65) {\scriptsize{$-\alpha$}};
    \node at (1.2,2.05) {\scriptsize{$\{-n\alpha\}$}};
    \node at (1.1,0.85) {\scriptsize{$v = 00100$}};
    \node at (0.35,1.65) {\scriptsize{$00101$}};
    \node at (-1.0,1.1) {\scriptsize{$s = 01001$}};
    \node at (-1.25,-1.0) {\scriptsize{$01010$}};
    \node at (0.1,-1.75) {\scriptsize{$\mirror{s} = 10010$}};
    \node at (1.45,-0.6) {\scriptsize{$10100$}};
    \draw[own] [->] (0.1,-1.6) -- (-1.1,0.9);
    \draw[own] [->] (-0.9,0.9) -- (1.05,-0.6);
    \draw[own] [->] (1.05,-0.75) -- (-0.85,-1.0);
    \draw[own] [->] (-0.9,-0.85) -- (0.35,1.45);
    \node at (-1.05,2.8) {\scriptsize{$J = R([\,\mirror{s}\,])$}};
    \node at (1.0,3.0) {\scriptsize{$R^{-1}([s])$}};
    \node[type1] at (-1.85,1.2) {\scriptsize{1}};
    \node[type1] at (0.0,-2.25) {\scriptsize{1}};
    \node[type2] at (0.4,2.15) {\scriptsize{2}};
    \node[type2] at (-1.85,-1.2) {\scriptsize{2}};
    \node[type2] at (2.1,-0.6) {\scriptsize{2}};
    \node[type3] at (1.75,1.3) {\scriptsize{3}};
  \end{tikzpicture}
  \caption{An example of the geometric ideas in the proof of \autoref{thm:conjugate_interval_lengths}.}
  \label{fig:intervals}
  \end{figure}

  \begin{example}
    Let $\alpha = [0;\overline{2,1}] = \frac{1}{2}(\sqrt{3} - 1)$ (that is, the continued fraction expansion of
    $\alpha$ has period $2,1$). Consider the semiconvergent $\frac{p_{3,1}}{q_{3,1}} = \frac{1+1}{3+2} = \frac{2}{5}$
    of $\alpha$ and factors of length $n = 5$. The factors of length $n$ are $00100, 00101, 01001, 01010, 10010,$ and
    $10100$. Their intervals are depicted in \autoref{fig:intervals}. There are intervals of type $1, 2,$ and $3$
    depending on their length. Intervals of type $1$ have length $\|2\alpha\|$, intervals of type $2$ have length
    $\|3\alpha\|,$ and the interval of type $3$ has length $\|2\alpha\| - \|3\alpha\| = \|5\alpha\|$. As in the proof
    of \autoref{thm:conjugate_interval_lengths}, the point $\{-n\alpha\}$ has split the type $1$ interval
    $J = I(0,-2\alpha)$ into intervals of type $2$ and $3$. The interval $R^{-1}(J)$ corresponds to the right special
    factor $\mirror{s} = \mirror{s}_{3,1}$. The arrows in the figure indicate how conjugation acts on $\mirror{s}$. The
    backward orbit of $J$ corresponds to conjugates of $\mirror{s}$ of type $1$ until the interval of the left special
    factor $s$ is encountered. As seen in the proof of \autoref{thm:conjugate_interval_lengths}, the interval
    $R^{-1}([s])$ no longer coincides with any interval of length $n$. Here
    $R^{-1}([s]) = L \cup M = I(0, -n\alpha) \cup I(0, -3\alpha)$ just as the proof requires. The factor having
    interval $L$ is here seen to be not conjugate to $\mirror{s}$ as it should be by the proof. The interval $M$ must
    then correspond to the conjugate of $s$. As in the proof, the rest of the conjugates of $\mirror{s}$ are obtained
    by rotating $M$ backwards. The intervals obtained this way are of type $2$.
  \end{example}

  We are now ready to prove the main result. The result was originally proven by
  \citeauthor{2003:powers_in_sturmian_sequences} \cite{2003:powers_in_sturmian_sequences}. We present it here phrased
  in a different way.

  \begin{theorem}\label{thm:power_characterization}
    Consider indices of factors of length $n > 0$ in $\Lang(\alpha)$, and let $k \geq 2$.
    \begin{enumerate}[(i)]
      \item If $n < q_1$, then the index of the conjugates of $0^{n-1}1$ is $1$, and the index of the remaining factor
            $0^n$ is $\lfloor a_1 / n \rfloor$.
      \item If $n = q_1$, then the index of the conjugates of $\mirror{s}_1$ is $a_2 + 1$, and the index of the
            remaining factor $0^{a_1}$ is $1$.
      \item If $n = q_k$, then the index of any of the first $q_{k-1} - 1$ conjugates of $\mirror{s}_k$ is
            $a_{k+1} + 2$, the index of any of the remaining $n + 1 - q_{k-1}$ conjugates is $a_{k+1} + 1,$ and
            the index of the remaining factor is $1$.
      \item If $n = q_{k,l}$ with $0 < l < a_k$, then the index of the first $q_{k-1} - 1$ conjugates of
            $\mirror{s}_{k,l}$ is $2$, and the index of the remaining factors is $1$.
      \item If $n = mq_1$ with $1 < m < a_2 + 1$, then the index of any of the first $q_1$ conjugates of
            $\mirror{s}_1^{\,m}$ is $\lfloor(a_2~+~1)/m\rfloor$, and the index of any remaining factor is $1$.
      \item If $n = mq_k$ with $1 < m < a_{k+1} + 2$, then the index of any of the first $q_{k-1} - 1$ conjugates of
            $\mirror{s}_k^{\,m}$ is $\lfloor (a_{k+1} + 2)/m \rfloor$, the index of any of the next $q_k + 1 - q_{k-1}$
            conjugates is $\lfloor(a_{k+1}~+~1)/m\rfloor$, and the index of any remaining factor is $1$.
      \item If $n$ does not fall into any of the above cases, then the index of every factor of length $n$ is $1$.
    \end{enumerate}
  \end{theorem}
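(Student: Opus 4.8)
The plan is to read off every index directly from the index formula \eqref{eq:index}, which expresses the index of a factor $w$ in terms of the length $|[w]|$ of its interval and the distance $\||w|\alpha\|$. The heart of the argument is cases (iii) and (iv), where $n=q_{k,l}$ with $k\ge2$ and $0<l\le a_k$, because there the interval lengths are handed to us verbatim by \autoref{thm:conjugate_interval_lengths}: the first $q_{k-1}-1$ conjugates of $\mirror{s}_{k,l}$ have interval length $\|q_{k,l-1}\alpha\|$, the next $n+1-q_{k-1}$ have length $\|q_{k-1}\alpha\|$, and the one remaining factor has length $\|q_{k,l}\alpha\|=\||w|\alpha\|$. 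Substituting into \eqref{eq:index} reduces everything to evaluating three floors of ratios of $\|q_{k,l-1}\alpha\|$, $\|q_{k-1}\alpha\|$ and $\|q_{k,l}\alpha\|$. Iterating \eqref{eq:distance_difference} gives $\|q_{k,l}\alpha\|=(a_k-l)\|q_{k-1}\alpha\|+\|q_k\alpha\|$, so $\|q_{k-1}\alpha\|<\|q_{k,l}\alpha\|$ exactly when $l<a_k$. This single inequality cleanly separates the two cases: for $l<a_k$ the three ratios lie in $(1,2)$, in $(0,1)$ and equal $1$, giving indices $2,1,1$; for $l=a_k$, where the denominator is $\|q_k\alpha\|$, the identity $\|q_{k-1}\alpha\|=a_{k+1}\|q_k\alpha\|+\|q_{k+1}\alpha\|$ (i.e. \eqref{eq:distance_difference} and \eqref{eq:max_coefficient} shifted up one level, cf. \eqref{eq:next_quotient}) places the ratios in $(a_{k+1}+1,a_{k+1}+2)$, in $(a_{k+1},a_{k+1}+1)$ and equal to $1$, giving indices $a_{k+1}+2$, $a_{k+1}+1$, $1$. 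These are routine calculations once the interval lengths are in hand.

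Cases (i) and (ii) are the boundary $k\le1$, where \autoref{thm:conjugate_interval_lengths} and the Three Distance Theorem are unavailable, so I would argue directly. For $n<q_1=a_1$ the only factors are $0^n$ and the conjugates of $0^{n-1}1$; since runs of $0$'s in $\Lang(\alpha)$ have length $a_1-1$ or $a_1$, we get $\text{ind}(0^n)=\lfloor a_1/n\rfloor$ and index $1$ for the rest. For $n=q_1$ one checks directly, or via \eqref{eq:index} after computing the two interval lengths at level $a_1$, that all $a_1$ conjugates of $\mirror{s}_1$ carry the single index $a_2+1$ while $0^{a_1}$ has index $1$. The absence here of a two-type split among conjugates is precisely the $k=1$ anomaly that forces (ii) and (v) to be phrased separately from (iii) and (vi).

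For the power cases (v), (vi) and the residual case (vii) I would invoke the elementary identity $\text{ind}(u^j)=\lfloor\text{ind}(u)/j\rfloor$ for primitive $u$, immediate from the definition. Writing an arbitrary factor $w$ with $\text{ind}(w)\ge2$ as $w=u^j$ with $u$ primitive, we have $u^2\in\Lang(\alpha)$, so \autoref{prp:square_length} forces $|u|\in\{q_{k'}\}\cup\{q_{k',l}\}$. If $|u|=q_{k',l}$ with $l<a_{k'}$, then case (iv) gives $\text{ind}(u)\le2$, hence $j=1$ and $n$ is itself a square length; if $|u|=q_{k'}$, then $n=jq_{k'}$ with $2j\le\text{ind}(u)\le a_{k'+1}+2$, so $1\le j\le a_{k'+1}+1<a_{k'+1}+2$, exactly the admissible range. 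Thus whenever $n$ avoids all listed forms no factor can exceed index $1$, which is (vii). Conversely, for $n=mq_k$ in range the index-exceeding factors are precisely the conjugates of $\mirror{s}_k^{\,m}$ whose base has index at least $m$ (guaranteed to occur by \autoref{lem:standard_square}), and their indices are $\lfloor\text{ind}(C^i(\mirror{s}_k))/m\rfloor$, which by cases (ii)/(iii) are the stated $\lfloor(a_{k+1}+2)/m\rfloor$ and $\lfloor(a_{k+1}+1)/m\rfloor$ (respectively $\lfloor(a_2+1)/m\rfloor$ when $k=1$).

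The main obstacle, I expect, is bookkeeping rather than any single hard estimate: one must track \emph{conjugate positions} across levels, verifying that the first $q_{k-1}-1$ and the next $q_k+1-q_{k-1}$ conjugates at level $q_k$ from \autoref{thm:conjugate_interval_lengths} are exactly the bases whose $m$-th powers survive at level $mq_k$, that the non-conjugate remaining factor of index $1$ contributes nothing to level $mq_k$, and that at $k=1$ the two conjugate types collapse into one. Keeping the counts consistent, so that the $mq_k+1$ factors at level $mq_k$ are partitioned correctly, and confirming that the named length-forms $q_k$, $q_{k,l}$ and $mq_k$ do not secretly overlap, is where the real care is required.
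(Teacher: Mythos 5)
Your proposal is correct and follows essentially the same route as the paper: cases \emph{(iii)}--\emph{(iv)} by substituting the interval lengths from \autoref{thm:conjugate_interval_lengths} into \eqref{eq:index} and evaluating the floors via \eqref{eq:distance_difference} and \eqref{eq:next_quotient}, cases \emph{(i)}--\emph{(ii)} by direct inspection of the factors $0^n$ and the conjugates of $0^{n-1}1$, and cases \emph{(v)}--\emph{(vii)} by passing to the primitive root, invoking \autoref{prp:square_length}, and floor-dividing the indices obtained in \emph{(ii)}--\emph{(iii)}. The only difference is one of explicitness: you spell out the identity $\mathrm{ind}(u^j) = \lfloor \mathrm{ind}(u)/j \rfloor$ and the reduction of conjugates of $\mirror{s}_k^{\,m}$ to $m$-th powers of conjugates of $\mirror{s}_k$, which the paper compresses into ``a straightforward application of \emph{(ii)} and \emph{(iii)}.''
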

  \begin{proof}
    First of all, observe that all the cases \emph{(i)--(vii)} are mutually exclusive. Consider the cases \emph{(i)}
    and \emph{(ii)}. The factors of length $n \leq q_1$ are readily seen to be $0^n$ and the conjugates of $0^{n-1}1$.
    As the index of $0$ is $a_1$, the index of the factor $0^n$ is $\lfloor a_1 / n \rfloor$. The intervals of the
    conjugates of $0^{n-1}1$ have length $\alpha$. If $n = 1$, then the index of the factor $0^{n-1}1 = 1$ is $1$. If
    $n > 1$, the number $1 + \lfloor \alpha / \|n\alpha\| \rfloor$ equals $1$ unless $n = q_1$ when it equals $a_2 + 1$
    by \eqref{eq:next_quotient}. The claims of \emph{(i)} and \emph{(ii)} follow from \eqref{eq:index}.
    
    Consider next a factor $w$ of length $n = q_{k,l}$ for some $k \geq 2$ and $l$ such that $0 < l \leq a_k$. By
    \autoref{thm:conjugate_interval_lengths}, the intervals of the first $q_{k-1} - 1$ conjugates of $\mirror{s}_{k,l}$
    have length $\|q_{k,l-1}\alpha\|$. Using
    \eqref{eq:index} and \eqref{eq:distance_difference} we see that their index equals to
    \begin{align*}
      1 + \left\lfloor \frac{\|q_{k,l-1}\alpha\|}{\|q_{k,l}\alpha \|} \right\rfloor
      = 1 + \left\lfloor \frac{\|q_{k,l}\alpha\| + \|q_{k-1}\alpha\|}{\|q_{k,l} \alpha\|} \right\rfloor
      = 2 + \left\lfloor \frac{\|q_{k-1}\alpha\|}{\|q_{k,l} \alpha\|} \right\rfloor.
    \end{align*}
    If $l \neq a_k$, then by \eqref{eq:min_distance} $\|q_{k,l}\alpha\| > \|q_{k-1}\alpha\|$, so the index is $2$. If
    $l = a_k$, then by \eqref{eq:next_quotient}, the index equals to $2 + a_{k+1}$. This proves the first claims in
    \emph{(iii)} and \emph{(iv)}. The latter cases are analogous, so \emph{(iii)} and \emph{(iv)} are proved.

    \autoref{prp:square_length} shows that the factors not covered by the cases \emph{(i)--(iv)} having index higher
    than $1$ must be nonprimitive. By \emph{(i)} and \emph{(iv)} they must have length $mq_k$ for some $k \geq 1$,
    meaning that we are in either of the cases \emph{(v)} or \emph{(vi)}. It is a straightforward application of
    \emph{(ii)} and \emph{(iii)} to deduce \emph{(v)} and \emph{(vi)}. The theorem is proved.
  \end{proof}

  In particular, every Sturmian word contains infinitely many cubes, but fourth powers are avoidable. The theorem
  implies the following weaker version which is still useful (compare to
  \cite[Lemma~3.6]{2002:the_index_of_sturmian_sequences}):

  \begin{corollary}
    Let $w \in \Lang(\alpha)$ be primitive. If $w^2 \in \Lang(\alpha)$, then $w$ is conjugate to $s_k$ for some
    $k \geq 0$ or to $s_{k,l}$ with $k \geq 2$ and $0 < l < a_k$. If $w^3 \in \Lang(\alpha)$, then either $w = 0$ and
    $a_1 > 2$ or $w$ is conjugate to some $s_k$ with $k \geq 1$. \qed
  \end{corollary}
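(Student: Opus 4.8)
The plan is to deduce the corollary directly from \autoref{thm:power_characterization} by means of the elementary equivalence $w^p \in \Lang(\alpha)$ if and only if $\ind{w} \geq p$, valid for any nonempty factor $w$ since the index is integral and finite. Thus $w^2 \in \Lang(\alpha)$ means the index is at least $2$ and $w^3 \in \Lang(\alpha)$ means it is at least $3$, and the task reduces to identifying which factors of a given length have index exceeding $1$ (resp.\ $2$). The structural fact driving everything is that, for each relevant length, the factors of index greater than $1$ form exactly one conjugacy class: at length $q_k$ with $k \geq 1$ they are precisely the conjugates of $\mirror{s}_k$, the single remaining factor having index $1$; at length $q_{k,l}$ with $0 < l < a_k$ they are precisely the first $q_{k-1}-1$ conjugates of $\mirror{s}_{k,l}$. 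In both situations the relevant mirror word is conjugate to the corresponding standard or semistandard word, as established in the proof of \autoref{thm:conjugate_interval_lengths}.

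For the square statement I would first apply \autoref{prp:square_length} to obtain $|w| = q_k$ for some $k \geq 0$ or $|w| = q_{k,l}$ for some $k \geq 2$ with $0 < l < a_k$. In the semistandard case, case (iv) of \autoref{thm:power_characterization} shows that the only factors of length $q_{k,l}$ of index at least $2$ are the first $q_{k-1}-1$ conjugates of $\mirror{s}_{k,l}$, so $w$ is conjugate to $s_{k,l}$. In the standard case with $k \geq 1$, cases (ii) and (iii) show that the factors of length $q_k$ of index at least $2$ are exactly the conjugates of $\mirror{s}_k$, whence $w$ is conjugate to $s_k$; and for $k = 0$ one has trivially $w = 0 = s_0$. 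This yields the first assertion.

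For the cube statement the same length dichotomy holds, since $w^3 \in \Lang(\alpha)$ forces $w^2 \in \Lang(\alpha)$. I would now reread the index tables against the threshold $3$. The semistandard lengths are excluded outright because case (iv) caps the index at $2$. At length $q_k$ with $k \geq 1$, any factor of index at least $3$ is again a conjugate of $\mirror{s}_k$ (the remaining factor having index $1$), hence conjugate to $s_k$, by cases (ii) and (iii). The one genuinely separate case is $|w| = q_0 = 1$, where $w = 0$; since the index of $0$ equals $a_1$, the cube $0^3$ lies in $\Lang(\alpha)$ exactly when $a_1 > 2$, which is precisely the exceptional clause in the statement.

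The proof is thus a transcription of \autoref{thm:power_characterization}, and the only point requiring care---the main obstacle, such as it is---is the bookkeeping of the boundary cases: separating the degenerate standard words $s_0 = 0$ and $s_1$ from the generic ones, and correctly tying the cube of $0$ to the hypothesis $a_1 > 2$. Everything else follows mechanically once the listed index values are matched against the thresholds $2$ and $3$ and the conjugacy of $\mirror{s}_{k,l}$ with $s_{k,l}$ (and of $\mirror{s}_k$ with $s_k$) is invoked.
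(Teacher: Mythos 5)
Your proposal is correct and matches the paper's intent exactly: the paper leaves the corollary with an immediate \qed, treating it as a direct reading-off of \autoref{thm:power_characterization} (together with \autoref{prp:square_length} for the length restriction and the conjugacy of $\mirror{s}_{k,l}$ with $s_{k,l}$ from the proof of \autoref{thm:conjugate_interval_lengths}), which is precisely what you spell out. Your handling of the boundary cases ($|w| = q_0$ giving $w = 0$ with index $a_1$, hence the $a_1 > 2$ clause for cubes) is the right bookkeeping and introduces no gap.
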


  We obtain the result of \cite{2002:the_index_of_sturmian_sequences},
  \cite{2000:special_factors_periodicity_and_an_application_to_sturmian}, and
  \cite{2001:fractional_powers_in_sturmian_words} on the fractional index of Sturmian words as a direct consequence of
  the results so far:

  \begin{theorem}\label{thm:index_formula}
    The fractional index of a Sturmian word with slope $\alpha$ is
    \begin{align*}
      \max\left\{a_1, 2 + \sup_{k \geq 2}\{a_k + (q_{k-1} - 2)/q_k\}\right\}.
    \end{align*}
  \end{theorem}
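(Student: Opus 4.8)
The plan is to read the fractional index of the word, $\sup\{\text{ind}_\Q(w)\colon w\in\Lang(\alpha)\}$, directly off the description of powers already obtained. First I would cut the supremum down to a short list of candidate lengths. By \autoref{thm:power_characterization} a factor has index larger than $1$ only when its length is some $q_k$, a semiconvergent denominator $q_{k,l}$, or a multiple $mq_k$. For the semiconvergent lengths the index is only $2$, while for a non-primitive length $mq_k$ the longest power of $\mirror{s}_k^{\,m}$ is a power of $\mirror{s}_k$, so its fractional index is $1/m$ times that of $\mirror{s}_k$; in both cases the values are dominated by those occurring at the lengths $q_k$. Hence only the letter $0$ (of length $q_0=1$, with fractional index $a_1$) and, for each $k\ge 1$, the factors of length $q_k$ can contribute to the supremum.

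For a fixed $k$ I would determine the longest factor of period $q_k$; dividing its length by $q_k$ yields the largest fractional index among factors of that length. I would do this dynamically, continuing the idea behind \eqref{eq:index}: the word $s_{x,\alpha}$ has period $q_k$ up to position $i$ exactly when $R^{i}(x)$ and $R^{i+q_k}(x)$ receive the same letter, and this fails precisely when $R^{i}(x)$ lies in one of the two arcs of length $\|q_k\alpha\|$ abutting the partition points $0$ and $1-\alpha$. Hence the longest period-$q_k$ factor has length $q_k$ plus the maximal number of consecutive orbit points avoiding this two-arc obstruction set, and applying \autoref{thm:three_distance} to that set I would show the maximal run to be $q_{k+1}+q_k-2$. (As a check, the cylinder surviving after the complete periods has length $\|q_{k+1}\alpha\|$ by \autoref{thm:conjugate_interval_lengths} and \eqref{eq:distance_difference}, exactly the room left for the fractional tail.) Therefore the longest factor of period $q_k$ has length $q_{k+1}+2q_k-2$, so the largest fractional index at length $q_k$ is
\begin{align*}
  \frac{q_{k+1}+2q_k-2}{q_k}=2+\frac{q_{k+1}-2}{q_k}=a_{k+1}+2+\frac{q_{k-1}-2}{q_k}.
\end{align*}

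Finally I would take the supremum over $k$. Combined with the value $a_1$ coming from the letter $0$, these quantities give the fractional index as $\max\{a_1,\,2+\sup_{k\ge 1}(q_{k+1}-2)/q_k\}$; re-indexing the supremum so that its integer part is expressed through the partial quotients casts it into the form asserted in the theorem.

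The step I expect to be hardest is the exact evaluation of this maximal run, i.e. the equality $q_{k+1}+q_k-2$ (equivalently, that the repetition persists for exactly $q_{k-1}-2$ letters beyond the last complete period). The integer part is handed to us by \autoref{thm:power_characterization}, but the $-2$ correction is delicate: it reflects the orbit's simultaneous interaction with both obstruction arcs, near $0$ and near $1-\alpha$—the dynamical shadow of the almost-commutation $s_k s_{k-1}\neq s_{k-1}s_k$—and pinning it down requires tracking which of the three gap lengths of \autoref{thm:three_distance} the orbit meets as it approaches each arc.
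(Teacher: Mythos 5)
Your proposal reaches the same core quantity as the paper---the largest fractional power with period $q_k$ has exponent $a_{k+1}+2+(q_{k-1}-2)/q_k$---but by a genuinely different route. The reduction is identical: both arguments use \autoref{thm:power_characterization} to discard every length except the $q_k$ and the letter $0$, and your remark that the fractional index of $w^m$ is $1/m$ times that of $w$ is exactly how the paper disposes of cases \emph{(v)} and \emph{(vi)}. The difference is at length $q_k$. The paper never returns to the dynamics: it takes the integer part $a_{k+1}+2$ from \autoref{thm:power_characterization}, observes that the fractional tail is governed by the extension to a right special factor, and reads off from the proof of \autoref{thm:conjugate_interval_lengths} that $C^{q_{k-1}-2}(\mirror{s}_k)=s_k$, so that $s_k$ is the conjugate with the longest such extension, of length $q_{k-1}-2$. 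You instead recompute the whole exponent as $q_k$ plus the longest run of orbit points avoiding two arcs of length $\|q_k\alpha\|$ abutting $0$ and $1-\alpha$. This set-up is sound, and your claimed run length $q_{k+1}+q_k-2$ is the true value: since the arc at $1-\alpha$ is the $R^{-1}$-image of the arc at $0$, a run of $m$ good positions forces $m+1$ consecutive orbit points to avoid the single arc at $0$, and the maximal return time to an arc of length $\|q_k\alpha\|$ with endpoint $0$ is $q_{k+1}+q_k$. That return-time estimate is precisely what your plan leaves unproved (as you acknowledge); establishing it, say via \autoref{prp:closest} and \eqref{eq:min_distance}, is work of comparable difficulty to the portion of \autoref{thm:conjugate_interval_lengths} that the paper recycles. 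So your route trades the paper's conjugation bookkeeping for a self-contained Diophantine computation; a side benefit is that it treats $k=1$ uniformly, a case where the paper's appeal to the ``first $q_{k-1}-1$ conjugates'' is vacuous.

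One step, however, fails as written: the concluding re-indexing. Using $q_{k+1}=a_{k+1}q_k+q_{k-1}$ and substituting $k\mapsto k-1$, your value $\max\{a_1,\,2+\sup_{k\geq1}(q_{k+1}-2)/q_k\}$ equals $\max\{a_1,\,2+\sup_{k\geq2}\{a_k+(q_{k-2}-2)/q_{k-1}\}\}$, in which $a_k$ is paired with $(q_{k-2}-2)/q_{k-1}$, not with $(q_{k-1}-2)/q_k$ as in the displayed statement, and no re-indexing makes the two coincide. They genuinely differ: for $\alpha=[0;2,3,1,1,1,\ldots]$ (so $q_1=2$, $q_2=7$) the displayed formula evaluates to $2+(3+0)=5$, whereas the fractional index of that Sturmian word is $4.5$, attained by $(01)^{4}0=010101010$; your expression gives $4.5$. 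This is not a defect of your derivation---the paper's own proof ends with exactly the same value $a_{k+1}+2+(q_{k-1}-2)/q_k$ and the same leap ``the claim follows''---but you should not assert that re-indexing produces the printed form. The printed formula carries an index shift, and the correct statement is the re-indexed expression above, which is what both you and the paper actually prove.
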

  \begin{proof}
    The largest fractional power of a factor with length less than $q_1$ is clearly $0^{a_1}$. Therefore by
    \autoref{thm:power_characterization} it is sufficient to analyze the largest fractional power of a (primitive)
    factor of length $q_k$ for $k \geq 1$. By \autoref{thm:power_characterization} the index $a_{k+1}+2$ of the first
    $q_{k-1} - 1$ conjugates of $\mirror{s}_k$ dominates the index of the rest of the factors of length $q_k$. The
    fractional part of the fractional index of a factor $w$ is determined by the shortest extension of $w$ to a right
    special factor. Note that from the proof of \autoref{thm:conjugate_interval_lengths} it is evident that
    $C^{q_{k-1} - 2}(\mirror{s}_k) = s_k$. Thus among the first $q_{k-1} - 1$ conjugates of $\mirror{s}_k$, the factor
    $s_k$ has longest extension to a right special factor, and the length of the extension is $q_{k-1} - 2$. Thus the
    fractional index of $s_k$ is $a_{k+1} + 2 + (q_{k-1} - 2)/q_k$. The claim follows.
  \end{proof}

  In particular, this theorem says that that a Sturmian word has bounded fractional index if and only if the partial quotients of
  its slope are bounded. This is a result of Mignosi \cite{1989:infinite_words_with_linear_subword_complexity}. An
  alternative proof was given by Berstel \cite{1999:on_the_index_of_sturmian_words}.

  \section{Acknowledgments}
  The author was supported by University of Turku Graduate School UTUGS Matti programme.

  We thank Markus Whiteland, Tero Harju, and Luca Zamboni for giving fruitful feedback on the draft version of this paper.

  \printbibliography
\end{document}
% vim: set textwidth=119: